\newtheorem{definition}{Definition}
\newtheorem{theorem}{Theorem}
\newtheorem{proposition}{Proposition}
\newcommand{\eref}[1]{(\ref{#1})}
\newcommand{\sref}[1]{Section~\ref{#1}}
\newcommand{\dref}[1]{Definition~\ref{#1}}
\newcommand{\pref}[1]{Proposition~\ref{#1}}
\newcommand{\cref}[1]{Constraint~\ref{#1}}
\newcommand{\tref}[1]{Table~\ref{#1}}
\newcommand{\Fref}[1]{Fig.~\ref{#1}}
\newcommand{\ignore}[1]{}
\begin{document}
\IEEEoverridecommandlockouts


\title{\vspace{-.9cm} Effectiveness of Reconfigurable Intelligent Surfaces to Enhance Connectivity in UAV Networks}

\author{Mohammed S. Al-Abiad, Member, IEEE, Mohammad Javad-Kalbasi, Student Member, IEEE, and Shahrokh Valaee,  Fellow, IEEE
	
	
	\thanks{Mohammed S. Al-Abiad,  Mohammad Javad-Kalbasi, and Shahrokh Valaee are with the Department of Electrical and Computer Engineering, University of Toronto, Toronto, Canada, 
Email: mohammed.saif@utoronto.ca, mohammad.javadkalbasi@mail.utoronto.ca, valaee@ece.utoronto.ca. 

A part of this paper \cite{saifglobecom} is accepted at the IEEE Globecom' 2023, Kuala Lumpur, Malaysia.

This work was supported in part by funding from the Innovation for Defence Excellence and Security (IDEaS) program from the Department of National Defence (DND).		
	}
	\vspace{-.35cm}
 }

\maketitle
\begin{abstract}
Reconfigurable intelligent surfaces (RISs) are expected to  make future 6G networks more connected and resilient  against node failures, due to their ability to introduce controllable phase-shifts onto impinging electromagnetic waves  and impose link redundancy. Meanwhile, unmanned aerial vehicles (UAVs) are prone to failure due to limited energy, random failures, or targeted failures, which causes network disintegration that results in information delivery loss. In this paper, we show that the integration between 
UAVs and RISs for improving network connectivity is crucial. We utilize RISs to provide path diversity and alternative connectivity options for information flow from user equipments (UEs) to less critical UAVs by adding more links to the network, thereby making the network more resilient and connected. To that end, we first define the criticality of UAV nodes, which reflects the importance of some nodes over other nodes. We then employ the algebraic connectivity metric, which is adjusted by the reflected links of the RISs and their criticality weights, to formulate the problem of maximizing the network connectivity. Such problem is a computationally expensive combinatorial optimization. To tackle this problem, we propose a relaxation method such that the discrete scheduling constraint of the problem is relaxed  and becomes continuous. Leveraging this, we propose two efficient solutions, namely semi-definite programming (SDP) optimization  and  perturbation heuristic, which both solve the problem in polynomial time. For the perturbation heuristic, we derive the lower and upper bounds of the algebraic connectivity obtained by adding new links to the network.  Finally, we corroborate the effectiveness of the proposed solutions through extensive simulation experiments.

\end{abstract}

\begin{IEEEkeywords}
Network  connectivity, algebraic connectivity, RIS-assisted UAV communications, graph theory, perturbation method.
\end{IEEEkeywords}

\section{Introduction}
\subsection{Motivation}
In future 6G networks, there is a proliferation of connected nodes (i.e., smart devices, sensors, military vehicles) and services (i.e., augmented reality, information flow, data collection) that need to be supported by wireless networks \cite{1,2, MEC_saif}. Consequently, there is a pressing need for more connected wireless networks  that are resilient and robust  against node and link failures. To address this need, unmanned aerial vehicles (UAVs) communication is shown to be a promising solution since they can be rapidly deployed with adjustable mobility \cite{UAV_economy, saif}.  One distinctive aspect of UAV-assisted communications involves enhancing network connectivity through the establishment of line-of-sight (LoS) connections with user equipment (UE) \cite{8613833}. 

The prime concern of UAV communications is that UAVs\footnote{We use the terms nodes and UAVs/UEs interchangeably in this paper. In addition, the terms links and edges are used interchangeably since we usually use a graph to represent a network.}  are prone to failure due to several reasons, such as limited energy, hardware failure, or targeted failure in the case of battlefield surveillance systems. Such UAV failures cause  network disintegration, and consequently, information flow from UEs to a fusion center through UAVs can be severely impacted. Hence, it is crucial to always keep the  network connected, which can be addressed by adding more backhual links to the network \cite{H}. Network densification, by adding more UAVs or access points (APs), helps to improve network connectivity  but that will increase the hardware and energy consumption drastically \cite{M}. In addition, deploying a large number of UAVs or APs can be challenging in densely populated urban areas due to site constraints, limited space, and limited UAV battery. Instead, deploying passive nodes in the network can achieve a more connected system with less cost and lower energy consumption. 

Recently, reconfigurable intelligent surfaces (RISs) have drawn significant attention from both academia and industry, with the features of relatively low-cost and the capability to extend coverage and reduce energy consumption. Besides their other benefits in wireless networks, RISs offer two key advantages for designing connected and resilient wireless networks as follows. First, RISs can improve network connectivity by creating an indirect path between a UE and a UAV when the LoS is not available, or to provide more reliable links between the connected UEs and the UAVs \cite{M1, javadpaper}. As a result, path diversity and alternative connectivity options for information flow from UEs to UAVs are provided. Second, RIS provides a high passive beamforming gain via adjusting its reflection coefficients intelligently \cite{9293155}. By tuning the phase shifts of RIS, we can direct the signals of the UEs from  critical UAVs that are most likely to fail to less critical UAVs. Therefore, RIS can   reflect the signals of the UEs  to the desired UAVs, which is one of the main motivations of this work. As a result,  RISs can be utilized to provide for more connected and resilient networks that can effectively operate in the presence of node failures. In this paper, we show that with the integration of UAV communications and RISs, a small number of low-cost RISs can increase the connectivity of the RIS-assisted UAV networks significantly.

Optimized network connectivity plays a crucial role in  designing more connected and resilient networks and extending network lifetime, which is  defined  as the time until the network becomes disconnected \cite{4657335, 4786516}. An important  metric that measures how well a graph network is connected is called  the algebraic connectivity, also called the Fiedler metric or the second smallest eigenvalue\footnote{We use the terms algebraic connectivity and   the Fiedler value interchangeably in this paper since they both measure the network connectivity of the graph Laplacian matrix.} of the graph Laplacian matrix\cite{new}. One important property of the algebraic connectivity is that the larger it is, the more connected the graph will be. Such metric has been considered widely in wireless sensor networks. Several studies, as will be detailed below, consider routing solutions with the focus more on extending the battery lifetime of sensor nodes, e.g., \cite{4657335, 4786516, 1331424, 925682, 926982}. One trivial strategy to optimize network connectivity is to add more links via adding more connected nodes (i.e., APs, relays, or UAVs) \cite{4657335}. However, despite significant advancements in wireless sensor networks and UAV communications, the  limited battery of nodes, the consumed energy,  and cost necessitate a simple and efficient network design. Therefore, we introduce a cost-effective expansion of the traditional UAV communications by deploying compact RIS passive nodes that reflect the signals of the UEs to the desired UAVs. As a result, more links are added to the network to optimize its algebraic connectivity significantly.

\subsection{Related Work}
In the recent literature, several works have studied the importance of RISs in improving different metrices, such as positioning accuracy \cite{ M, M1}, extending the coverage of networks \cite{javadpaper}, boosting the communication capability \cite{9293155, RIS_annie, RIS_Mohanad, Javad_globecom2023}, etc. However, to the best of our knowledge, there is no work that used RISs to maximize network connectivity. Therefore, we briefly review some of the related works  that addressed  network connectivity maximization problem in different wireless sensor networks and UAV communications, e.g., \cite{4657335, 4786516, 1331424, 925682, 926982, 8292633, new_lifetime}. 

In spite of recent advances in wireless sensor networks, most of the existing studies consider routing solutions with the focus more on extending the battery lifetime of sensor nodes. These works define network connectivity as the network lifetime, in which the first node or all the nodes of a sensor network have failed \cite{4657335, 1331424, 926982}. In \cite{4657335}, the authors addressed the problem of adding relays to maximize the connectivity of multi-hop wireless networks. In \cite{8292633}, the authors maximized the algebraic connectivity by positioning the UAV to maximize the connectivity of small-cells backhaul network.  The paper \cite{11} proposed three different random relay deployment strategies, namely, the connectivity-oriented, the lifetime-oriented, and the hybrid-oriented strategies. However, there is no explicit optimization problem for maximizing the network lifetime in that work. A mathematical approach to placing a few flying UAVs over a wireless ad-hoc network in order to optimize the network connectivity for better coverage was proposed in \cite{12}.  However, none of the aforementioned works has ever explicitly considered the exploitation of RISs to add more reflected links to improve network connectivity of UAV-assisted networks. Different from the works \cite{4657335, 4786516, javadpaper, 1331424, 925682, 926982, 8292633, new_lifetime} that focused on routing solutions, this paper focuses on designing a more connected RIS-assisted UAV network. This network enables information flow from the UEs to the UAVs and is resistant to UAV failure.

The network connectivity maximization problem can be solved either by (i) relaxing the problem's constraints using convex relaxation, and then formulating the problem as a semi-definite programming (SDP) optimization problem to be solved using CVX \cite{saifglobecom, 4657335, 8292633} or (ii) using exhaustive search. However, the exhaustive search is computationally intractable for large network sizes and the SDP optimization is sub-optimal. Therefore, there is a need to find an efficient heuristic algorithm that can find a  set of suitable reflected links of the RISs that connect the UEs to the UAVs in polynomial time, such that we maximize the connectivity of a RIS-assisted UAV network. As one of the main contributions in this paper, we build on the reference \cite{book} to perturbate the eigenvalues of the  original Laplacian matrix with rank-one update matrix and propose a novel perturbation heuristic. This efficient perturbation heuristic is based on calculating the values of the Fiedler vector, which can be conveniently applied to large graphs with low computational complexity.

\subsection{Contribution}
In this paper, we investigate the integration between RISs and UAV-assisted communication systems by studying nodes criticality and the algebraic connectivity through SDP optimization and the Laplacian matrix perturbation, so that we maximize the connectivity of the envisioned RIS-assisted UAV network. The main contributions of this work are:
\begin{itemize}

\item \textbf{RIS-assisted UAV problem formulation:} First, we propose to define the criticality of UAV nodes, which reflects the importance of some nodes over other nodes towards the network connectivity. 
By considering the nodes’ importance from the graph connectivity perspective, the node with higher importance will be retained in the network, therefore the connectivity of the remaining network is maintained as long as possible. We then employ the algebraic connectivity metric, which is adjusted by the reflected links of the RISs and their criticality weights, to formulate the problem of maximizing the network connectivity. Such problem is shown to be a combinatorial optimization problem. By embedding the nodes criticality in the links selections, we propose two solutions to solve the proposed problem.

\item \textbf{Convex relaxation and SDP formulation:} To tackle this problem, we propose a relaxation method such that the objective function of the relaxed  problem becomes continuous. Leveraging this, we propose to formulate the problem as a SDP optimization problem that can be solved efficiently in polynomial time using CVX.

\item \textbf{Algebraic connectivity perturbation:} 
We propose a low-complexity, yet efficient, perturbation heuristic, which has less complexity compared to the relaxation method. In this heuristic perturbation, one UE-RIS-UAV link is added at a time by calculating only the  eigenvector values corresponding to the algebraic connectivity. We also derive the lower and upper bounds of the algebraic connectivity obtained by adding new links to the network based on this perturbation heuristic.

\item \textbf{Performance evaluation:} We evaluate the performance of the proposed schemes in terms of network connectivity via extensive simulations. We verify that the proposed schemes result in improved  network connectivity as compared to the existing solutions. In particular, the proposed perturbation heuristic has a superior performance that is roughly the same as the optimal solution using exhaustive search and close to the upper bound. 

\end{itemize}

The rest of this paper is organized as follows. In \sref{S}, we describe the system model, network connectivity, and then define nodes criticality   and outline some of its important properties. In \sref{PF}, the network connectivity maximization problem in a RIS-assisted UAV network is formulated. In \sref{PS}, the proposed solutions are explained, and the upper and lower bounds on the algebraic connectivity of our proposed perturbation scheme are analyzed in \sref{LU}. Extensive simulation results  are presented in \sref{NR}, and the conclusion is given in \sref{C}.

\begin{figure}[t!]

	\centerline{\includegraphics[width=0.85\linewidth]{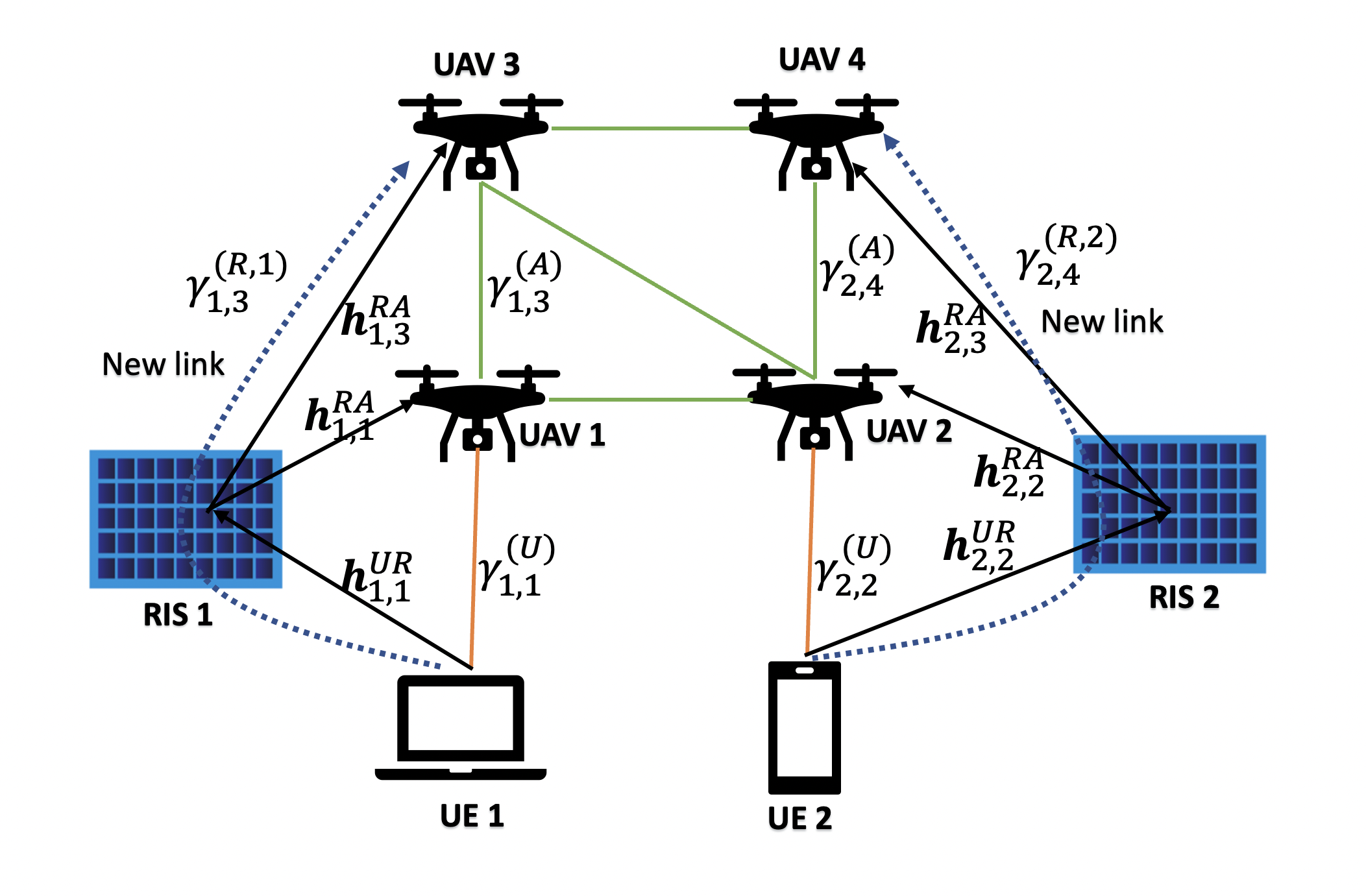}}
	
	\caption{A typical RIS-assisted UAV network with $2$ RISs, $2$ UEs, and $4$ UAVs.}
	\label{fig1}
\end{figure}

\section{System Model and Network Connectivity}\label{S}

\subsection{System Model}\label{SA}
Consider the sample RIS-assisted UAV system shown in \Fref{fig1}. The system consists of a set of UAVs, a set of RISs, and multiple UEs that represent ground users, sensors, etc. The sets of UAVs, RISs, and UEs are denoted as $\mathcal A=\{1, 2, \ldots, A\}$, $\mathcal R=\{1, 2, \ldots, R\}$, and $\mathcal U=\{1, 2, \ldots, U\}$, respectively. All UEs and UAVs are equipped with single antennas. 
The $A$ UAVs fly and hover over
assigned locations at a fixed flying altitude and connect with $U$ UEs. The locations of the UAVs, UEs, and RISs are assumed to be fixed. The RISs are installed with a certain altitude $z_r$, $r \in \mathcal R$. Let $(x_r, y_r, z_r)$ be the 3D location of the $r$-th RIS, $(x_a, y_a, z_a)$  the 3D location of the $a$-th UAV, 
and $(x_u, y_u)$  the 2D location of the $u$-th UE, respectively. The distances between the $u$-th UE and the $r$-th RIS and between the $r$-th RIS and the $a$-th UAV are denoted by $R^\text{UR}_{u,r}$ and $R^\text{RA}_{r,a}$, respectively.

Due to their altitude, UAVs can have good  connectivity to UEs. However, UEs may occasionally experience deep fade. To overcome this problem and further improve network connectivity, we propose to utilize a set of RISs to impose link redundancy to the network. As such, the network becomes more resilient against node failures by providing path diversity and alternative connectivity options between the UEs and the UAVs. Each RIS is equipped with a controller and $M:=M_b \times M_c$ passive reflecting units (PRUs) to form a uniform passive array (UPA). Each column of the UPA has $M_b$ PRUs with an equal spacing of $d_c$ meters (m) and each row of the UPA consists of $M_c$ PRUs with an equal spacing of $d_b$ m. Through appropriate adjustable phase shifts, these PRUs can add indirect links between the UEs and the UAVs or connect the blocked UEs to the desired UAVs. The phase-shift matrix of the $r$-th RIS is modeled as the diagonal matrix $\mathbf{\Theta}_r =diag(e^{j\theta_{1}^r},\ldots, e^{j\theta_{2}^r},\ldots,e^{j\theta_{M}^r})$, where $\theta_{m}^{r}\in [0,2\pi)$, for $r\in \mathcal R$, and $m\in \{1, \ldots, M\}$.

For the $u$-th UE, the reachable UAVs are denoted by a set $\mathcal A^u$. Assuming that $|\mathcal A^u|\geq 1$ in general, some UEs are able to access multiple UAVs simultaneously. Similarly, the reachable RISs for  the $u$-th UE are denoted by a set $\mathcal R_u$. The successful communications between the UEs and RISs are measured using the distance threshold $D_o$, i.e., the $u$-th UE can be connected to the $r$-th RIS with distance $d_{u,r}$ if $d_{u,r} \leq D_o$.
The communications between the UEs and UAVs/RISs are assumed to occur over different time slots (i.e., time multiplexing access) to avoid interference among the scheduled UEs. Therefore, in each time slot, we assume that distinct UEs $u$ and $u'$ are allowed to transmit simultaneously if 
$\mathcal R_u \cap \mathcal R_{u'} =\emptyset$ to reduce interference.

We focus on the network connectivity from data link-layer viewpoint, thus we abstract the physical layer factors and consider a model that relies only on the distance between the nodes. Therefore, similar to \cite{8292633}, we model only the large scale fading and ignore the small scale fading. To quantify the UEs transmission to the UAVs/RISs, we use the signal-to-noise ratio (SNR). For the $u$-th UE, SNR is defined as follows \cite{8292633}
\begin{equation}
\gamma^{(U)}_{u,a}=\frac{d_{u,a}^{-\alpha}p}{N_0},
\end{equation}
where $d_{u,a}$ is the distance between the $u$-th UE and the $a$-th UAV, $p$ is the transmit power of the $u$-th UE, which is maintained fixed for all the UEs, $N_0$ is the additive white Gaussian noise (AWGN) variance, and $\alpha$ is the path loss exponent that depends on the transmission environment.

UAVs hover at high altitudes, hence we reasonably  assume that they maintain LoS channel between each other \cite{8613833}. The path loss between the $a$-th UAV and $a'$-th UAV can be expressed as
\begin{equation}
\Gamma_{a,a'}=20\log\bigg( \frac{4 \pi f_c d_{a,a'}}{c} \bigg),
\end{equation}
where $d_{a,a'}$ is the distance between the $a$-th UAV and the $a'$-th UAV, $f_c$ is the carrier frequency, and $c$ is the speed of light. Consequently, the SNR between the $a$-th and the $a'$-th UAVs is
\begin{equation}
\gamma^{(A)}_{a,a'}=10\log P-\Gamma_{a,a'}-10 \log N_0, ~~~~ (dB)
\end{equation}
where $P$ is the transmit power of the $a$-th UAV, which is maintained fixed for all the UAVs. Note that the SNR of the $u$-th UE determines whether it has a successful connection to the corresponding UAV $a$. In other words, the $a$-th UAV is assumed to be within the transmission range of the $u$-th UE if $\gamma^{(U)}_{u,a} \geq \gamma^\text{UE}_{0}$, where $\gamma^\text{UE}_{0}$ is the minimum SNR threshold for the communication links between the UEs and the UAVs.  Similarly, we assume that UAV $a$ and UAV $a'$ have a successful connection provided that $\gamma^{(A)}_{a,a'} \geq \gamma^\text{UAV}_{0}$, where $\gamma^{(UAV)}_{0}$ is the minimum SNR threshold for the communication links between the UAVs. 

Since RISs are deployed in high altitude, the signal propagation of UE-RIS link is adopted to be a simple yet reasonably accurate LoS channel model \cite{9293155}. The LoS channel vector between the $u$-th UE and the $r$-th RIS is given by \cite{9293155}
\begin{equation}
\mathbf h^\text{UR}_{u,r}=\sqrt{\frac{\beta_0}{(d_{u,r}^\text{UR})^2}}\tilde{\mathbf h}^\text{UR}_{u,r},
\end{equation}
where $d_{u,r}^\text{UR}$ is the distance between the $u$-th UE and the $r$-th RIS, $\beta_0$ denotes the path loss at the reference distance $d=1$m, and $\tilde{\mathbf h}^\text{UR}_{u,r}$ 
represents the array response component, which can be denoted by
\begin{eqnarray}\nonumber
\tilde{\mathbf{h}}^\text{UR}_{u,r}&\!\!=\!\!&{[1,e^{-j\frac{2\pi d_{b}}{\lambda}\phi_{u,r}^\text{UR}\psi_{u,r}^\text{UR} },\ldots,e^{-j\frac{2\pi d_{b}}{\lambda}(M_{b}-1)\phi_{u,r}^\text{UR}\psi_{u,r}^\text{UR} }]}^{T}\\\nonumber
&~&\otimes {[1,e^{-j\frac{2\pi d_{c}}{\lambda}\varphi_{u,r}^\text{UR}\psi_{u,r}^\text{UR} },\ldots,e^{-j\frac{2\pi d_{c}}{\lambda}(M_{c}-1)\varphi_{u,r}^\text{UR}\psi_{u,r}^\text{UR} }]}^{T},
\end{eqnarray}
where $\lambda$ is the wavelength, 
$\otimes$ is the Kronecker product, and 
$\phi_{u,r}^\text{UR}, \varphi_{u,r}^\text{UR}$, and $\psi_{u,r}^\text{UR}$ are related to the sine and cosine terms of the vertical and horizontal angles-of-arrival (AoAs)
at the $r$-th RIS \cite{9293155}, and respectively given by $\phi_{u,r}^\text{UR}=\frac{y_{u}-y_{r}}{\sqrt{(x_u-x_r)^2+(y_u-y_r)^2}}$, $\varphi_{u,r}^\text{UR}=\frac{x_{r}-x_{u}}{\sqrt{(x_u-x_r)^2+(y_u-y_r)^2}}$, $\psi_{u,r}^\text{UR}=\frac{-z_{r}}{d^\text{UR}_{u,r}}$. On the other hand, the RISs and UAVs are deployed in high altitudes, thus the reflected signal propagation of the RIS-UAV link typically occurs in clear airspace where the obstruction or reflection effects diminish. The LoS channel vector between the $r$-th RIS and the $a$-th UAV is given by
\begin{equation}
\mathbf h^{RA}_{r,a}=\sqrt{\frac{\beta_0}{(d_{r,a}^{RA})^2}}\tilde{\mathbf h}^{RA}_{r,a},
\end{equation}
where $d_{r,a}^\text{RA}$ is the distance between the $r$-th RIS and the $a$-th UAV, and $\tilde{\mathbf h}^\text{RA}_{r,a}$ 
represents the array response component which can be denoted by
\begin{eqnarray}\nonumber
\tilde{\bold{h}}^\text{RA}_{r,a}\!\!&=\!\!&{[1,e^{-j\frac{2\pi d_{b}}{\lambda}\phi_{r,a}^\text{RA}\psi_{r,a}^\text{RA} },\ldots,e^{-j\frac{2\pi d_{b}}{\lambda}(M_{b}-1)\phi_{r,a}^\text{RA}\psi_{r,a}^\text{RA} }]}^{T}\\\nonumber
&~&\otimes {[1,e^{-j\frac{2\pi d_{c}}{\lambda}\varphi_{r,a}^\text{RA}\psi_{r,a}^\text{RA} },\ldots,e^{-j\frac{2\pi d_{c}}{\lambda}(M_{c}-1)\varphi_{r,a}^\text{RA}\psi_{r,a}^\text{RA} }]}^{T},
\end{eqnarray}
where $\phi_{r,a}^\text{RA}, \varphi_{r,a}^\text{RA}$, and $\psi_{r,a}^\text{RA}$ are related to the sine and cosine terms of the vertical and horizontal angles-of-departure (AoDs)
from the $r$-th RIS to the $a$-th UAV \cite{9293155}, and respectively given by $\phi_{r,a}^\text{RA}=\frac{y_{r}-y_{a}}{\sqrt{(x_r-x_a)^2+(y_r-y_a)^2}}$, $\varphi_{r,a}^\text{RA}=\frac{x_{r}-x_{a}}{\sqrt{(x_r-x_a)^2+(y_r-y_a)^2}}$, and $\psi_{r,a}^\text{RA}=~\frac{z_r-z_a}{d^\text{RA}_{r,a}}$.

Based on the channel models described above, the concatenated channel for the
UE-RIS-UAV link between the $u$-th UE and the $a$-th UAV through the $r$-th RIS is given by \cite{9293155}
\begin{equation}\label{6line}
\mathbf h^\text{URA}_{u,a} =  (\mathbf h^\text{RA}_{r,a} \mathbf)^H \mathbf{\Theta}_r \mathbf h^\text{UR}_{u,r}. 
\end{equation}
Accordingly, the SNR of the reflected link between the $u$-th UE and the $a$-th UAV through the $r$-th RIS can be written as \cite{9593204} 
\begin{equation}\label{7line}
\gamma^{(R,r)}_{u,a}= \frac{p \lVert \mathbf h^\text{URA}_{u,a}\rVert^2}{N_0}.
\end{equation}

In this paper, we model the considered RIS-assisted UAV network as an undirected
 graph $\mathcal G(\mathcal V, \mathcal E)$, where $\mathcal V = \{v_1, v_2, \cdots, v_V\}$ is the
set of nodes (i.e., UAVs and UEs) in the network, and $\mathcal E= \{e_1, e_2, \cdots, e_E\}$ is the set of all edges, where $V$ and $E$ are the cardinality of the sets $\mathcal V$ and $\mathcal E$, respectively, i.e.,  
$V=|\mathcal U \cup \mathcal A| =|\mathcal V|$ and $E= |\mathcal E|$. 
The edge between any two nodes is created based on a typical previously mentioned SNR threshold. The key notations are summarized in \tref{table_new}.

	\begin{table*}[t!]
	\renewcommand{\arraystretch}{0.9}
	\caption{Summary of the Main Notations and Definitions}
	\label{table_new}
	\centering
	\begin{tabular}{|p{2.3cm}| p{11.0cm}|}
		\hline
		\textbf{Variable} & \textbf{Definition}\\
		\hline
		\hline
		$\mathcal{A}, \mathcal U$, $\mathcal R$ & Sets of $A$ UAVs, $U$ UEs, and $R$ RISs \\
  	\hline
		$\mathcal U_t$ & Set of transmitting UEs in time slot $t$\\
		\hline
		$R^\text{UR}_{u,r}, R^\text{UR}_{r,a}$ & Distances between UE $u$ (RIS $r$) and RIS $r$ (UAV $a$) \\
         \hline
          $\theta^r_m$ & Phase-shift of element $m$ at RIS $r$\\
		\hline
          $d_c, d_b$ & Column and row spacing between RIS elements\\
           \hline
		$\mathbf{\Theta}_r, M$ & Phase-shift matrix of RIS $r$ and total number of RIS elements\\
		\hline
		$\mathcal R_u, D_0$ & Reachable RISs to UE $u$ and distance threshold\\
		\hline
		$\gamma^{(U)}_{u,a}, \gamma^{(A)}_{a,a'}$ & SNRs of UE $u$ to UAV $a$ and UAV $a$ to UAV $a'$\\
		\hline
		$\gamma^\text{(UE)}, \gamma^\text{(UAV)}$ &  SNR thresholds for UE-UAV and UAV-UAV communications \\
		\hline
      $\mathbf h^\text{UR}_{u,r}, \mathbf h^\text{RA}_{r,a}$ &  LoS channel vectors between UE $u$ (RIS $r$) and RIS $r$ (UAV $a$)\\
		\hline
		$\gamma^{(R,r)}_{u,a}$ & SNR of the reflected link between UE $u$ and UAV $a$ through RIS $r$\\
  	\hline
		$\mathcal G, \mathcal G_{-n}$ & Original graph and its sub-graph after removing node $n$ and all its edges\\
		\hline
		$\mathcal V, \mathcal E$ & Sets of vertices and edges of graph $\mathcal G$\\
				\hline
		$\mathbf L, \mathbf A$ & Laplacian and incidence matrices\\
  	    \hline
		$deg(v_{n})$ & Degree of node $v_n$ \\
  	\hline
		$w_l$ & Weight of link $l$ that connects UE $u$ and UAV $a$ through a RIS\\
		\hline
		$\lambda_2(\mathbf L)$ & Algebraic connectivity of  the original Laplacian matrix $\mathbf L$\\
		\hline
  $\lambda_2(\mathbf L')$ & Algebraic connectivity of  adjusted Laplacian matrix $\mathbf L'$\\
		\hline
		$\mathcal C_n$ & Criticality of node $n$\\
		\hline
		$X_{u,a}, Y^u_{r,a}$ & Two binary variables for UE-UAV and UE-RIS-UAV scheduling\\
		\hline
			\end{tabular}
\end{table*}

\subsection{Network Connectivity}\label{NC}
This subsection briefly discusses the definition of the Laplacian matrix $\mathbf L$ representing a graph $\mathcal G(\mathcal V, \mathcal E)$, its second smallest eigenvalue $\lambda_2(\mathbf L)$, and the relationship between $\lambda_2(\mathbf L)$ and the connectivity of the associated graph $\mathcal G(\mathcal V, \mathcal E)$.

For an edge $e_{k}$, $ 1 \leq k \leq E$, that connects two nodes $\{v_n, v_m\} \in \mathcal V$, let $\mathbf a_k$ be a vector, where the $n$-th and $m$-th elements in $\mathbf a_k$ are given by $a_{k,n}=1$ and $a_{k,m}=-1$, respectively, and zero otherwise. The incidence matrix $\mathbf A$ of a graph $\mathcal G$ is the matrix with the $k$-th column given by $\mathbf a_k$.  Furthermore, the weight of an edge $e_k$ that connects two nodes $\{v_n, v_m\} \in \mathcal V$, denoted by $w_{n,m}$ or $w_k$, is a function of the criticality of the nodes, as will be discussed in \sref{nc}. The weight vector $\mathbf w \in {[\mathbb R^+]}^E$ is defined as $\mathbf w= [ w_1, w_2, \ldots, w_E]$. Hence, in an undirected graph $\mathcal G(\mathcal V, \mathcal E)$, the Laplacian matrix $\mathbf L$ is a $V$ by $V$ matrix, which is defined as follows \cite{4657335}:
\begin{equation} \label{lap}
\mathbf L= \mathbf A  ~diag(\mathbf w) ~\mathbf A^T=\sum^{E}_{k=1} w_k \mathbf a_k \mathbf a^T_k,
\end{equation}
where the entries of $\mathbf L$ are given element-wise by
\begin{equation}
L(n,m) = \begin{cases}
deg(v_{n}) &\text{if} ~v_n=v_m,\\
-w_{n,m} &\text{if}~ (v_n, v_m) \in \mathcal E, \\
0 & \text{otherwise},
\end{cases}
\end{equation}
where $n, m \in \{1, 2, \ldots, V\}$ are the indices of the nodes, and $deg(v_{n})$ is the degree of node $v_n$, which represents the number of all of its neighboring nodes. 

As mentioned above, in network connectivity, algebraic connectivity  measures how well a graph $\mathcal G$ that has the associated Laplacian matrix $\mathbf L$ is connected \cite{new}. This metric is usually denoted as $\lambda_2(\mathbf L)$. 
The main motivation of $\lambda_2(\mathbf L)$  as a network connectivity metric comes from the following two main reasons \cite{new}. First, $\lambda_2 (\mathbf L) > 0$ if and only if $\mathcal G$ is connected, i.e., $\mathcal G$ has only one component. It is worth mentioning that when $\lambda_2(\mathbf L)=0$, the graph is disconnected in which it has more than one component. Second,  $\lambda_2 (\mathbf L)$ is monotone increasing in the edge set, i.e., if $\mathcal G_1=(\mathcal V, \mathcal E_1)$ is associated with $\mathbf L_1$ and $\mathcal G_2=(\mathcal V, \mathcal E_2)$ is associated with $\mathbf L_2$ and $\mathcal E_1 \subseteq \mathcal E_2$, then $\lambda_2 (\mathbf L_1) \leq \lambda_2 (\mathbf L_2)$. This implies that $\lambda_2(\mathbf L)$ qualitatively represents the connectivity of a graph in the sense that the larger $\lambda_2(\mathbf L)$ is, the more connected the graph will be. To this end, since $\lambda_2 (\mathbf L)$ is a good measure of how connected the graph is, the more carefuly selected edges that exist between the UEs and the UAVs, the longer the network can live without being disconnected due to node failures. Thus, the  network becomes more  {\it resilient}. Based on that and similar to \cite{4657335, new_lifetime},  this paper  considers $\lambda_2 (\mathbf L)$ as a quantitative measure of network connectivity, which shows network resiliency against node failures.  For simplicity, the remaining of the paper uses node $n$ instead of node $v_n$.

\subsection{Nodes Criticality}\label{nc}
Let $\mathcal G_{-n}$ be the remaining graph after removing UAV node $n$ and all its adjacent edges to other nodes. Noticed that the most critical nodes in $\mathcal G$ are those representing the UAVs since UAVs have many connections to UEs and other UAVs, i.e., they represent the backhaul core of the network.  We propose to quantify the connectivity of the remaining graph based on the Fiedler value. The connectivity of the remaining graph can be quantified by $\lambda_2(\mathcal G_{-n})$ of the graph resulted from removing a typical node $n$ and all its connected edges in $\mathcal G$. We define the nodes criticality as follows.
\begin{definition} \label{def1}
The criticality of node $n$, which reflects the severity of network connectivity after removing that node and its connected edges to other nodes, is defined as 
\begin{equation}\label{eq1}
\mathcal C_n=\frac{1}{\lambda_2(\mathcal G_{-n})}.
\end{equation}
\end{definition}
\qed


\textbf{Remark:} 
Since  the Laplacian matrix $\mathbf L$ is positive semi-definite (expressed as $ 0 \preceq \mathbf L$), we have 
$\mathcal C_n \geq 0$.

\begin{theorem} $ \frac{1}{\lambda_2(\mathcal G)-1}$ is a tight upper bound on $\mathcal C_n $.
\end{theorem}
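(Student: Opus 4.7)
The plan is to reduce the claim $\mathcal{C}_n \leq 1/(\lambda_2(\mathcal{G})-1)$ to the classical Fiedler-style eigenvalue inequality $\lambda_2(\mathcal{G}_{-n}) \geq \lambda_2(\mathcal{G}) - 1$, from which the theorem follows immediately by reciprocating, under the implicit assumption $\lambda_2(\mathcal{G}) > 1$ that is needed for the stated upper bound to be finite and positive.

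To prove this auxiliary inequality, I would set up three closely related symmetric matrices and chain Cauchy interlacing with Weyl's inequality. Let $\mathbf{L}$ be the Laplacian of $\mathcal{G}$, let $\mathbf{L}^{(n)}$ denote its $(V-1)\times(V-1)$ principal submatrix obtained by deleting row and column $n$, and let $\mathbf{L}_{-n}$ be the Laplacian of the sub-graph $\mathcal{G}_{-n}$. A key observation that requires care is that $\mathbf{L}^{(n)}$ is not itself the Laplacian of $\mathcal{G}_{-n}$: its diagonal entries still carry the weights $w_{i,n}$ of edges that were once incident to $n$. Consequently one has the additive decomposition
\begin{equation}
\mathbf{L}^{(n)} = \mathbf{L}_{-n} + \mathbf{W}_n,
\end{equation}
where $\mathbf{W}_n$ is a diagonal positive semi-definite matrix with $(\mathbf{W}_n)_{ii} = w_{i,n}$ when $(i,n) \in \mathcal{E}$ and zero otherwise. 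Cauchy's interlacing theorem applied to the principal submatrix $\mathbf{L}^{(n)}$ of $\mathbf{L}$ gives $\lambda_2(\mathbf{L}^{(n)}) \geq \lambda_2(\mathbf{L})$, while Weyl's inequality applied to the additive decomposition above gives $\lambda_2(\mathbf{L}^{(n)}) \leq \lambda_2(\mathbf{L}_{-n}) + \lambda_{\max}(\mathbf{W}_n)$. Chaining these two bounds yields $\lambda_2(\mathcal{G}_{-n}) \geq \lambda_2(\mathcal{G}) - \max_{i \sim n} w_{i,n}$, and specializing to the unweighted underlying graph (or to any normalization in which $w_{i,n} \leq 1$) produces the desired $\lambda_2(\mathcal{G}_{-n}) \geq \lambda_2(\mathcal{G}) - 1$.

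For tightness I would exhibit the complete graph $K_V$ with unit edge weights: since $\lambda_2(K_V) = V$ and $\mathcal{G}_{-n} = K_{V-1}$ has $\lambda_2(K_{V-1}) = V-1$, the criticality is exactly $\mathcal{C}_n = 1/(V-1) = 1/(\lambda_2(\mathcal{G}) - 1)$, so the bound is attained. The main obstacle is not the spectral interlacing itself, which is standard, but the careful bookkeeping distinguishing $\mathbf{L}^{(n)}$ from $\mathbf{L}_{-n}$: omitting the diagonal correction $\mathbf{W}_n$ would only yield the wrong-direction inequality $\lambda_2(\mathcal{G}_{-n}) \leq \lambda_2(\mathcal{G})$ from interlacing alone, which is strictly weaker and does not imply the claim. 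A secondary subtlety is the weight hypothesis $\max_{i\sim n}w_{i,n}\leq 1$, which is what allows the ``$-1$'' to appear in place of the generic gap $-\lambda_{\max}(\mathbf{W}_n)$ on the right-hand side.
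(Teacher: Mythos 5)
Your proposal is correct, and it arrives at the same pivotal inequality $\lambda_2(\mathcal G_{-n}) \geq \lambda_2(\mathcal G)-1$ as the paper, but by a genuinely different route. The paper goes \emph{up}: it augments $\mathcal G$ to a graph $\mathcal G_{com}$ in which node $n$ is universal, observes that the Fiedler vector $\mathbf v$ of $\mathbf L(\mathcal G_{-n})$ padded with a zero is an eigenvector of the block matrix $\mathbf L(\mathcal G_{com})$ with eigenvalue $\lambda_2(\mathcal G_{-n})+1$ (using $\mathbf 1^T\mathbf v=0$), and then chains $\lambda_2(\mathcal G)\leq\lambda_2(\mathcal G_{com})\leq\lambda_2(\mathcal G_{-n})+1$ via edge-monotonicity. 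You go \emph{down}: you delete row and column $n$, correctly note that the resulting principal submatrix $\mathbf L^{(n)}$ is $\mathbf L_{-n}+\mathbf W_n$ rather than the Laplacian of $\mathcal G_{-n}$ itself, and combine Cauchy interlacing ($\lambda_2(\mathbf L^{(n)})\geq\lambda_2(\mathbf L)$) with Weyl ($\lambda_2(\mathbf L^{(n)})\leq\lambda_2(\mathbf L_{-n})+\lambda_{\max}(\mathbf W_n)$). Your version buys a sharper and more honest statement for weighted graphs, namely $\lambda_2(\mathcal G_{-n})\geq\lambda_2(\mathcal G)-\max_{i\sim n}w_{i,n}$, and it makes explicit the hypothesis $w_{i,n}\leq 1$ that the paper's proof also needs but leaves implicit (its block form of $\mathbf L(\mathcal G_{com})$, with the $+\mathbf I$ and $-\mathbf 1$ entries, silently assumes unit weights on all edges incident to $n$, and the monotonicity step $\mathcal E\subseteq\mathcal E_{com}$ would fail if an existing edge at $n$ had weight exceeding one). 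What the paper's construction buys in exchange is self-containedness: it needs only an explicit eigenvector computation and the monotonicity of $\lambda_2$, with no appeal to interlacing or Weyl. Both arguments certify tightness with the same example, the complete graph $K_V$.
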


\begin{proof}  Consider the graph $\mathcal G (\mathcal V, \mathcal E)$ with the set of vertices, $\mathcal V$, and the set of edges, $\mathcal E$. Let us define a new graph $\mathcal G_{com}(\mathcal V, \mathcal E_{com})$ by extending $\mathcal G (\mathcal V, \mathcal E)$ with adding all missing edges from node $n$, such that, $\mathcal E \subseteq \mathcal E_{com}$, and node $n$ is connected to all other nodes in the graph. Then, $\mathbf L(\mathcal G_{com})$ can be written as 
\begin{equation}\label{L_comp}
\mathbf L(\mathcal G_{com}) = 
\begin{pmatrix}
\mathbf L(\mathcal G_{-n})+\mathbf I,  & -\mathbf 1 \\
-\mathbf 1^T, & V-1
\end{pmatrix}.
\end{equation}
Let $\mathbf v$ be an eigenvector of $\mathbf L(\mathcal G_{-n})$ that is corresponding to $\lambda_2(\mathbf L(\mathcal G_{-n}))$. From \eref{L_comp}, we show that $\lambda_2(\mathcal G_{-n})+1$ is an eigenvalue of $\mathbf L(\mathcal G_{com})$ as follows: 
\begin{align*}   
\mathbf L(\mathcal G_{com}) 
\begin{pmatrix}
\mathbf v\\
0
\end{pmatrix} & = 
\begin{pmatrix}
\mathbf L(\mathcal G_{-n})+\mathbf I,  & -\mathbf 1 \\
-\mathbf 1^T, & V-1
\end{pmatrix}
\begin{pmatrix}
\mathbf v\\
0
\end{pmatrix} \\& 
= \begin{pmatrix}
\mathbf L(\mathcal G_{-n})\mathbf v +\mathbf I \mathbf v\\
- \mathbf 1^T \mathbf v
\end{pmatrix}
\stackrel{({\rm a})}{=} \begin{pmatrix}
 \lambda_2(\mathcal G_{-n})\mathbf v+\mathbf v\\
0
\end{pmatrix} \\ &
= \begin{pmatrix}
 (\lambda_2(\mathcal G_{-n})+1 )\mathbf v\\
0
\end{pmatrix}=
[\lambda_2(\mathcal G_{-n})+1]
\begin{pmatrix}
 \mathbf v\\
0
\end{pmatrix},
\end{align*}
where (a) follows from the fact that $- \mathbf 1^T \mathbf{v}=0$ for a connected graph, and $\mathbf I \mathbf v=\mathbf v$. Thus, 
$\lambda_2(\mathcal G_{-n})+1$ is an eigenvalue of $\mathbf L(\mathcal G_{com})$ that is different from the zero eigenvalue, i.e., $\lambda_2(\mathcal G_{-n})+1 \neq \lambda_1(\mathcal G_{com})$ and $\lambda_2(\mathcal G_{com}) \leq \lambda_2(\mathcal G_{-n})+1$.
Therefore, 
\begin{align} \label{444}
\lambda_2(\mathcal G_{-n}) \geq  \lambda_2(\mathcal G_{com})-1.
\end{align}
Finally, since $\mathcal E \subseteq \mathcal E_{com}$,  $\lambda_2(\mathcal G) \leq \lambda_2(\mathcal G_{com})$, (\ref{444}) can be written as 
\begin{align} \label{555}
\lambda_2(\mathcal G_{-n}) \geq  \lambda_2(\mathcal G)-1.
\end{align}
This shows that removing node $n$ from $\mathcal G$ can at least reduce the algebraic connectivity  $\lambda_2(\mathcal G)$ by $1$, which depends on the number of edges that connect node $n$ to the remaining nodes in $\mathcal G$. From (\ref{555}) and (\ref{eq1}), we have 
\begin{align}  \label{666}
\mathcal C_n \leq  \frac{1}{\lambda_2(\mathcal G)-1}.
\end{align}
Inequality (\ref{666}) reasonably implies that if $\mathcal G$ is highly connected, removing a node from it would not affect the network connectivity much since all nodes have high criticality and more connected. 
Assuming that $\mathcal G$ is a complete graph, $\lambda_2(\mathcal G) =V$, and accordingly, $\lambda_2(\mathcal G_{-n}) =\lambda_2(\mathcal G)-1=V-1$. In this case, we have
\begin{equation} \nonumber 
\mathcal C_n = \frac{1} {\lambda_2(\mathcal G_{-n})}=  \frac{1}{V-1},
\end{equation}
Which shows the tightness of upper bound. We point out that $\mathcal C_n \approx  \frac{1}{V}$ for large complete graphs.
\end{proof}

\dref{def1} implies that the high criticality of nodes  that cause severe reduction in the remaining algebraic network connectivity should be balanced. In this case, no node in the graph can severely impact the network resiliency if it has accidentally or intentionally failed. Intuitively to make the network more connected, the RISs should be utilized properly since they have less probability of failures as compared to UAVs due to their limited batteries. In addition, RISs will add more links to the network, which will maximize the network connectivity and reduce the criticalities of the nodes. Consequently, the signals of the UEs can be redirected via the RISs to less critical UAV nodes, which results in more balanced network. One can achieve this balance by assigning weights to edges connecting UE nodes to UAV nodes, and the selection of the UE-RIS-UAV combinations relatively rely  on these weights. Thus, we propose to design the weight of link $e_k$ that connects nodes $n$ and $m$ as follows 
\begin{equation}\label{wei}
w_{n,m}=\frac{1}{\mathcal C_n+\mathcal C_m},
\end{equation}
where $\mathcal C_n$ and $\mathcal C_m$ are the criticalities of nodes $n$ and $m$, respectively. The weight $w_{n,m}$ is high if the criticalities of the nodes $n$ and $m$ are low, thus a link between them via a possible RIS would be highly created.  The high critical UAV nodes are less likely to have new links from the UEs via the RISs since their failures may significantly degrade the network connectivity. Therefore, by carefully adding new links to less critical UAV nodes, one can balance the criticality of all the UAV nodes in the network.

\section{Problem Formulation}\label{PF}
 
The problem of  maximizing the connectivity  of RIS-assisted UAV networks can be stated as follows. 
Given a RIS-assisted UAV network represented by a graph $\mathcal G(\mathcal V, \mathcal E)$, what are the optimum combinations between the UEs and the UAVs via optimizing phase shifts of the RISs in order to maximize $\lambda_2 (\mathbf L)$ of the resulting network while balancing the criticality of the UAV nodes? 
Essentially, deploying the RISs in the network may result in connecting multiple UEs to multiple UAVs, which were not connected together. It may also result in adding new alternative options to the UEs if their scheduled high critical UAV nodes have failed. By adjusting their phase shifts, RISs can smartly beamform the signals of the UEs to the desired, less critical UAVs to maximize the network connectivity and make the network more {\it resilient} against node failures. 

We consider that multiple UEs are allowed to transmit simultaneously if their mutual transmission coverage to the RISs is empty.  Therefore, we have the following UE-RIS-UAV association constraints:
\begin{itemize}
\item Multiple UEs are allowed to transmit as long as they do not have common RISs in their coverage transmissions.

\item Each RIS is connected to only one UE. 
\item Each RIS reflects the signal of the selected UE to only one UAV.
\item Each UAV is connected to one RIS only. 
\end{itemize}

Let $\mathcal A^{u,r}_0$ be a set of reachable UAVs that have indirect communication links from the $u$-th UE through the $r$-th RIS, i.e., $\mathcal A^{u,r}_0 =\{a\in \mathcal A\backslash \mathcal A^{u} \mid \gamma^{r}_{u,a} \geq \gamma^\text{RIS}_{0}\}$, where $\mathcal A^{u}$ is defined above as the set of UAVs that have direct links to the $u$-th UE. We aim at providing alternative links to connect the UEs to the suitable UAVs in the set $\mathcal A_0^{u,r}$, $\forall u \in \mathcal U, r \in \mathcal R$. As such, the UEs do not miss the communications if their scheduled UAV have failed.
Let  $X_{u,r}$ be a binary variable that is equal to $1$ if the $u$-th UE is connected to the $r$-th RIS, and zero otherwise. Now, let  $Y^u_{r,a}$ be a binary variable that is equal to $1$ if the $r$-th RIS is connected to the $a$-th UAV when the $u$-th UE is selected to transmit, and zero otherwise. Let $\mathcal U_t$ be the set of the possible transmitting UEs, where $\mathcal U_t \subset \mathcal U$. This set $\mathcal U_t$ consists of multiple UEs that have empty mutual transmission coverage to the RISs, which can be defined as $\mathcal U_t=\{u \in \mathcal U \mid (u, u') \in \mathcal U^2, \mathcal R_u \cap \mathcal R_{u'}= \emptyset\}$. Therefore, the considered optimization problem of  maximizing the network connectivity  is formulated as follows:
\begin{subequations}
\label{eqob}
\begin{align}
&\max_{\substack{X_{u,r}, Y^u_{r,a}, \theta^r_m}}  \lambda_2(\mathbf L')
\label{eqoba} \\
 &{\rm subject~to\ }  \sum\limits_{u \in \mathcal U_t} X_{u,r}=1, ~~~~~~~~~~~~~~~~~~~~\forall r \in \mathcal R,  \label{eqobb} \\
 & ~~~~~~\sum_{u \in \mathcal U_t} \sum_{a \in \mathcal A^{u,r}_0} Y^u_{r,a} \leq 1, ~~~~~~~~~~~~~~~~~~~~~\forall r \in \mathcal R,  \label{eqobc}\\
& ~~~~~~\theta^r_m \in [0, 2\pi),~~~~~~~~ \forall r \in \mathcal R, m=\{1, \ldots, M\}, \label{eqobf} \\
& ~~~~~~X_{u,r}\in \{0,1\}, Y^u_{r,a}\in \{0,1\},~~~\forall r \in \mathcal R, a \in \mathcal A.
\end{align}
\end{subequations} 
In (\ref{eqob}), constraint (\ref{eqobb}) implies that each RIS receives a signal from a single UE in the set $\mathcal U_t$. 
Constraint (\ref{eqobc}) assures that each RIS reflects the signal of its associated UE  to only one UAV.  This also means  that at maximum $R$ paths from the selected UEs to the suitable UAVs through the RISs. 
Constraint (\ref{eqobf}) is for the RIS phase shift optimization.

Since the above optimization problem  (\ref{eqob}) is NP-hard, we first propose heuristic solutions to find feasible UE-RIS-UAV associations. Afterwards, we optimize the phase shift of the RISs elements to smartly direct the signals of the UEs to the suitable UAV nodes.

\section{Proposed Solutions}\label{PS}
In this section, we  reformulate the problem \eref{eqob}, and then develop two heuristic solutions to maximize $\lambda_2 (\mathbf L')$. In \sref{CR}, we relax the problem in \eref{eq10a} to a convex optimization problem in order to be formulated as an SDP problem. In \sref{PH}, we propose a novel perturbation heuristic that selects the maximum possible increase in $\lambda_2 (\mathbf L')$ based on the weighted values of the differences between the values of the Fiedler vector.

We add a link connecting the $u$-th UE to the $a$-th UAV through the $r$-th RIS if both $X_{u,r}$ and $Y^u_{r,a}$ in (\ref{eqob}) are $1$. Let $\mathbf z$ be a vector representing the UE-RIS-UAV candidate associations, in which case $X_{u,r}=1$ and $Y^u_{r,a}=1$, $\forall u \in \mathcal U, a \in \mathcal A, r \in \mathcal R$. Therefore, the problem in (\ref{eqob}) can be seen as having a set of $|\mathbf z|$ UE-RIS-UAV candidate associations, and we want to select the optimum $R$ UE-RIS-UAV associations among these
$|\mathbf z|$ associations. This optimization problem can be formulated as
\begin{align} 
&\max_{\mathbf z} ~~~~~~~~~~\lambda_2(\mathbf L'(\mathbf z))
\label{eq10a} \\
& {\rm subject~to\ } ~~~~\mathbf 1^T \mathbf z =R, \nonumber \\
& ~~~~~~~~~~~~~~~~~~\mathbf z \in \{0,1\}^{|\mathbf z|}, \nonumber
\end{align}
where $\mathbf 1 \in \mathbf R^{|\mathbf z|}$ is the all-ones vector and
\begin{equation}\label{update}
\mathbf L'(\mathbf z)=\mathbf L+\sum^{|\mathbf z|}_{l=1} z_l w_l \mathbf a_l \mathbf a^T_l,
\end{equation}
where $\mathbf a_l$ is the incidence vector resulting from adding link $l$ to the original graph $\mathcal G$, $\mathbf 1^T \mathbf z=R$ indicates that the number of chosen RISs is $R$, $\mathbf L$ is the Laplacian matrix of the original graph $\mathcal G$, and $w_l$ is the weight of a constructed edge that connects a RIS with a UE and a UAV as given in \eref{wei}. The $l$-th element of $\mathbf z$, denoted by $z_l$, is either
$1$ or $0$, which corresponds to whether a RIS should be
chosen or not, respectively.  Clearly, the dimension of $\mathbf L$ and $\mathbf L'(\mathbf z)$ is $V \times V$.  We notice that the effect of adding RISs appears only in the edge set $\mathcal E$, and not in the node set $\mathcal V$. For ease of illustration, \eref{update} can be written as 
\begin{equation} 
\mathbf L'=\mathbf L+(\mathbf z \otimes \mathbf I_{|\mathbf z|})\Lambda,
\end{equation}
where $\Lambda = \big [ (\mathbf A_1 diag(\mathbf w_1)\mathbf A^T_1)^T, \ldots,  (\mathbf A_{|\mathbf z|} diag(\mathbf w_{|\mathbf z|})\mathbf A^T_{|\mathbf z|})^T \big]^T$ and the block $(\mathbf A_1 diag(\mathbf w_1)\mathbf A^T_1)^T$ is a $V \times V$ matrix.

The problem in (\ref{eq10a}) is combinatorial, and can  be
solved exactly by exhaustive search by computing $\lambda_2(\mathbf L')$ for ${|\mathbf z| \choose R}$ Laplacian matrices. However, this is not practical for large graphs that have large $|\mathbf z|$ and $R$. Instead, we are interested in proposing efficient heuristics for solving the problem in (\ref{eq10a}).

\vspace{-0.12in}
\subsection{Convex Relaxation}\label{CR}
The proposed SDP solution for multiple RISs in this subsection is mainly related to the preliminary work \cite{saifglobecom} but the authors considered a simple case of one RIS without utilizing the criticality of the nodes. 

The optimization vector in (\ref{eq10a}) is the vector $\mathbf z$. The $l$-th element of $\mathbf z$, denoted by $z_l$, is either
$1$ or $0$, which corresponds to whether this UE-RIS-UAV association should be chosen or not, respectively. Since (\ref{eq10a}) is NP-hard problem with high complexity, we relax the constraint on the entries of $\mathbf z$ and allow them to take any value in the interval $[0, 1]$. Specifically, we relax the Boolean constraint $\mathbf z \in \{0,1\}^{|\mathbf z|}$ to be a linear constraint $\mathbf z \in [0,1]^{|\mathbf z|}$, then we can represent the problem (\ref{eq10a}) as
\begin{align}
&\max_{\mathbf z} ~~~~~~~~~~~~~\lambda_2(\mathbf L'(\mathbf z))
\label{eq11a} \\
& {\rm subject~to\ } ~~~~~~\mathbf 1^T \mathbf z=R,\nonumber \\
& ~~~~~~~~~~~~~~~~~~~~0 \leq \mathbf z \leq 1.\nonumber
\end{align}
In \cite{4786516}, it was shown that $\lambda_2(\mathbf L'(\mathbf z))$ in (\ref{eq11a}) is the point-wise
infimum of a family of linear functions of $\mathbf z$, which is a concave function in $\mathbf z$. In addition, the relaxed constraints are linear in $\mathbf z$. Therefore, the optimization problem in (\ref{eq11a}) is a convex optimization problem \cite{4786516}, and is equivalent to the following SDP optimization problem \cite{CON}
\begin{align}
&\max_{\mathbf z, q} q
\label{eq100a} \\
& {\rm subject~to\ } q(\mathbf I - \frac{1}{|\mathbf z|}\mathbf 1 \mathbf 1^T) \preceq \mathbf L'(\mathbf z), \mathbf 1^T \mathbf z=R, 0 \leq \mathbf z \leq 1, \nonumber
\end{align}
where $\mathbf I \in \mathbf R^{V \times V}$ is the identity matrix and $\mathbf F \preceq \mathbf L$ denotes that $\mathbf L- \mathbf F$ is a positive semi-definite matrix.

The solution to the SDP optimization problem in (\ref{eq100a}) is explained as follows. First, we calculate the corresponding phase shifts of the RISs from each feasible UE node to each feasible UAV node, such that we generate all the possible schedules $\mathbf z$. In particular, the corresponding phase shift at PRU of the $r$-th RIS to reflect the signal of the $u$-th UE to the $a$-th UAV is calculated  as follows \cite{9293155}
\begin{multline} \label{phase} 
\theta^r_{m}=   \pi \frac{f_c}{c}\big\{ d_b(m_b-1)\psi^\text{RA}_{r,a}   \phi^\text{RA}_{r,a}+d_c(m_c-1)\psi^\text{RA}_{r,a} \varphi^\text{RA}_{r,a}\\+d_b(m_b-1)\psi^\text{UR}_{u,r}\phi^\text{UR}_{u,r}  +d_c(m_c-1)\psi^\text{UR}_{u,r}\varphi^\text{UR}_{u,r}\big\}.
\end{multline}
Second, we use off-the-shelf CVX software solver \cite{SDP-M} to  solve the SDP optimization problem in (\ref{eq100a}) and obtain $\mathbf z$.  The entries of the output vector $\mathbf z$ resulting from the CVX solver are continuous, that are between $0$ and $1$, and accordingly,  we consider to round the maximum $R$ entries to $1$ while others are rounded to zero. 

Note that if $n$ is an articulation node (i.e., the removal of that node disconnects the network \cite{4657335}), then  $\lambda_2(\mathcal G_{-n})$ theoretically equals to zero. This might cause numerical problem when we calculate the weight in the network connectivity. To avoid this problem, we introduce a small threshold, $\epsilon$. If $\lambda_2(\mathcal G_{-n}) \leq \epsilon$, we set $\lambda_2(\mathcal G_{-n})= \epsilon$. The steps of calculating the edge weights are summarized in Algorithm 1.

 \begin{algorithm}[t!]
	\caption{Weight Calculation Based On Nodes Criticality}
	\label{Algorithm1}
	\begin{algorithmic}[1]
		\State \textbf{Input:} Construct $\mathcal G(\mathcal V, \mathcal E)$ and set $\epsilon = 10^{-5}$
        \For{For each edge $l$ that connects two nodes $u,a$}
        \State Form the Laplacian matrix $\mathbf L(\mathcal G_{-u})$  as \eref{lap} and find the corresponding criticality value $\mathcal C_u$: 
        \If{$\lambda_2 (\mathcal G_{-u}) > \epsilon$}
         $\mathcal C_u=1/\lambda_2 (\mathcal G_{-u})$
         \Else ~~$\mathcal C_u=1/ \epsilon$
        \EndIf
        \State Form the Laplacian matrix  $\mathbf L(\mathcal G_{-a})$ as \eref{lap} and find the corresponding criticality value $\mathcal C_a$: 
        \If{$\lambda_2 (\mathcal G_{-a}) > \epsilon$}
         $\mathcal C_a=1/\lambda_2 (\mathcal G_{-a})$
         \Else ~~$\mathcal C_a=1/ \epsilon$
        \EndIf
        \State Calculate the weight of edge $l$ as \eref{wei}.
         \EndFor         
	\end{algorithmic}
\end{algorithm}

\subsection{A Greedy Perturbation Heuristic}\label{PH}
The SDP optimization has high complexity when $|\mathbf z|$ and $R$ are large, which is the case of large networks. 
Instead, we propose an effective greedy heuristic for solving (\ref{eq10a}) based on the values of the Fiedler vector, which is denoted by $\mathbf v$.  On the other hand, unlike the exhaustive search that calculates $\lambda_2(\mathbf L')$ for each possible association of UE-RIS-UAV, the proposed perturbation heuristic adds the $R$ edges one at a time by calculating only the weighted values of the differences between the values of the Fiedler vector. In the following proposition, we prove the upper bound of $\lambda_2(\mathbf L')$, and the proposed perturbation heuristic will be described next.


\begin{proposition}\label{prep1}
$\lambda_2(\mathbf L')$ is upper bounded by $ \lambda_2(\mathbf L)+ w_{l}(v_u-v_a)^2$, where  $v_u$ and $v_a$ are the corresponding values of the $u$-th and $a$-th indices of the Fiedler vector $\mathbf v$ of $\lambda_2 (\mathbf L)$ and $w_l$ is the weight of edge $l$ as given in \eref{wei}.

\end{proposition}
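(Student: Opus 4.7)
The plan is to use the variational characterization (Courant--Fischer) of the second smallest eigenvalue and test it against the Fiedler vector of the unperturbed Laplacian $\mathbf L$. Since $\mathbf L$ is symmetric positive semi-definite with $\mathbf L \mathbf 1 = \mathbf 0$, we can write
\begin{equation}\nonumber
\lambda_2(\mathbf L') = \min_{\substack{\mathbf y \perp \mathbf 1 \\ \|\mathbf y\|=1}} \mathbf y^T \mathbf L' \mathbf y,
\end{equation}
so any feasible vector $\mathbf y$ that is orthogonal to $\mathbf 1$ and has unit norm provides an \emph{upper} bound on $\lambda_2(\mathbf L')$.

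The natural choice is to take $\mathbf y = \mathbf v$, the Fiedler vector of $\mathbf L$ associated with $\lambda_2(\mathbf L)$, normalized so that $\|\mathbf v\|=1$ and $\mathbf v^T \mathbf 1 = 0$ (this normalization is permissible because the Fiedler vector is orthogonal to the all-ones vector, as noted in the proof of the preceding theorem). With this choice, $\mathbf v^T \mathbf L \mathbf v = \lambda_2(\mathbf L)$ by the eigenvalue equation.

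Next, I would invoke the rank-one structure of the added link. From \eref{update}, adding the single link $l$ connecting UE $u$ and UAV $a$ with weight $w_l$ gives $\mathbf L' = \mathbf L + w_l \mathbf a_l \mathbf a_l^T$, where $\mathbf a_l$ is the incidence vector with entries $+1$ at index $u$, $-1$ at index $a$, and $0$ elsewhere. Hence $\mathbf a_l^T \mathbf v = v_u - v_a$, and
\begin{equation}\nonumber
\mathbf v^T \mathbf L' \mathbf v = \mathbf v^T \mathbf L \mathbf v + w_l (\mathbf a_l^T \mathbf v)^2 = \lambda_2(\mathbf L) + w_l (v_u - v_a)^2.
\end{equation}
Combining with the min characterization yields $\lambda_2(\mathbf L') \leq \lambda_2(\mathbf L) + w_l(v_u - v_a)^2$, as claimed.

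There is no real obstacle here; the argument is a short Rayleigh-quotient calculation. The only subtlety worth flagging is the direction of the inequality: because $w_l \mathbf a_l \mathbf a_l^T$ is positive semi-definite, one might naively expect a lower bound on $\lambda_2(\mathbf L')$, but the min-max characterization turns the computation at the specific test vector $\mathbf v$ into an \emph{upper} bound rather than an exact value, since the true minimizer of $\mathbf y^T \mathbf L' \mathbf y$ over unit vectors orthogonal to $\mathbf 1$ need not coincide with the Fiedler vector of $\mathbf L$. This is precisely what makes the bound useful for the greedy heuristic: choosing the edge $l$ that maximizes $w_l(v_u - v_a)^2$ greedily maximizes the guaranteed upper bound on the connectivity gain at each step, without having to recompute eigenvalues.
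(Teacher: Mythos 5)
Your proof is correct, and at its core it is the same bound as the paper's: both evaluate the quadratic form of the rank-one update at the Fiedler vector of $\mathbf L$. The difference is in how the key inequality is justified. The paper invokes, as a black box from Ghosh and Boyd, the supergradient property $\lambda_2(\mathbf L+Y)\leq \lambda_2(\mathbf L)+\mathrm{Tr}(Y\mathbf v\mathbf v^T)$ and specializes it to $Y=w_l\mathbf a_l\mathbf a_l^T$, noting that $\mathrm{Tr}(w_l\mathbf a_l\mathbf a_l^T\mathbf v\mathbf v^T)=w_l(\mathbf a_l^T\mathbf v)^2=w_l(v_u-v_a)^2$; it also spends several lines computing $\partial\lambda_2/\partial z_l$ as a first-order approximation, which motivates the greedy rule but is not needed for the bound. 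You instead derive the inequality from scratch via the Courant--Fischer characterization $\lambda_2(\mathbf L')=\min_{\mathbf y\perp\mathbf 1,\ \|\mathbf y\|=1}\mathbf y^T\mathbf L'\mathbf y$ (valid here because $\mathbf L'$ is PSD with $\mathbf L'\mathbf 1=\mathbf 0$, so $\mathbf 1$ spans the bottom eigenspace) and the feasible test vector $\mathbf v$. This makes your argument self-contained and arguably cleaner, since the supergradient lemma is itself proved by exactly this Rayleigh-quotient computation; your closing remark about why a PSD perturbation nevertheless yields an upper bound on $\lambda_2(\mathbf L')$ is also accurate and worth keeping.
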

\begin{proof}    
For simplicity, we use $V_{ua}=(v_u-v_a)^2$.  If $\mathbf v$ is an eigenvector with unit norm corresponding to $\lambda_2(\mathbf L)$, then $\mathbf v \mathbf v^T$ is a supergradient of $\lambda_2(\mathbf L)$ \cite{4177113}. This means for any symmetric matrix $Y$ with size $V \times V$, we have 
\begin{equation}\label{tra}
\lambda_2(\mathbf L+Y)\leq \lambda_2(\mathbf L)+\textbf {Tr}(Y \mathbf v \mathbf v^T).
\end{equation}
In one connected graph,  $\lambda_2$ is isolated, where  $\lambda_1 < \lambda_2 < \lambda_3$, then $\lambda_2(\mathbf L')$ is an analytic function of $\mathbf L'$, and therefore of $\mathbf z$. In this case the supergradient is the gradient \cite{4177113}, i.e.,
\begin{equation} \label{der1}
   \pdv{\lambda_2(\mathbf L')}{z_l}=\mathbf v^T\pdv{\mathbf L'(\mathbf z)}{z_l}\mathbf v,
\end{equation}
where $\mathbf v$ is the unique normalized eigenvector corresponding to $\lambda_2(\mathbf L')$. By taking the partial derivative of 
\begin{equation}
\mathbf L'(\mathbf z)=\mathbf L+\sum^{|\mathbf z|}_{l=1} z_l w_{l} \mathbf a_l \mathbf a^T_l,
\end{equation}
we have 
\begin{equation} \label{der}
   \pdv{\mathbf L'(\mathbf z)}{z_l}=w_{l}\mathbf a_l \mathbf a_l^T.
\end{equation}
By substituting \eref{der} in \eref{der1}, we have 
\begin{align}   
 &\pdv{\lambda_2(\mathbf L')}{z_l}= w_{l} \mathbf v^T \mathbf a_l \mathbf a_l^T \mathbf v=  w_{l}V_{ua}.
\end{align}
Therefore, the partial derivative of $\lambda_2(\mathbf L')$ with respect to $z_l$ is $w_{l}V_{ua}$, where $l$ is the added edge between UE node $u$ and UAV node $a$. When $\lambda_2$ is isolated, $w_{l}V_{ua}$ gives the first order approximation of the increase in $\lambda_2(\mathbf L')$, if edge $l$ is added to the graph. Therefore, our step (2) of the greedy heuristic corresponds to adding an edge, from among the remaining $R$ edge candidates, that gives the largest possible increase in $\lambda_2(\mathbf L')$, according to a first order approximation. Therefore, we can say that if  $\mathbf v \mathbf v^T$ is a supergradient of $\lambda_2(\mathbf L')$ and based on \eref{tra}, $\lambda_2(\mathbf L')$ can be written as follows 
\begin{align} \label{weaker}
\lambda_2(\mathbf L') & =\lambda_2(\mathbf L+w_{l} \mathbf a_l \mathbf a_l^T)  \leq \lambda_2(\mathbf L)+ \textbf {Tr}( w_{l} \mathbf a_l \mathbf a_l^T \mathbf v \mathbf v^T) \\ & \nonumber 
\leq \lambda_2(\mathbf L)+ w_{l}  \textbf {Tr}( \mathbf a_l \mathbf a_l^T \mathbf v \mathbf v^T) 
\leq \lambda_2(\mathbf L)+ w_{l} V_{ua}.  
\end{align}
\eref{weaker} completes the proof.
\end{proof}

\textbf{Greedy Heuristic:}
Given \pref{prep1},  in each step of the proposed heuristic, we choose an edge $l$ that connects UE $u$ and UAV $a$, which has the largest value of $w_{l}V_{ua}$ that provides the maximum possible increase in $\lambda_2(\mathbf L')$. Starting from $\mathcal G$ and $\mathbf L$, we add new edges one at a time as follows:
\begin{itemize}
\item Calculate $\mathbf v$, a unit eigenvector corresponding to $\lambda_2 (\mathbf L)$, where $\mathbf L$ is the current Laplacian matrix.

\item From the remaining candidate edges corresponding to the UE-RIS-UAV schedules, add an
edge $l$ connecting UE $u$ and UAV $a$ with the largest $w_{l}V_{ua}$.

\item Remove all the UE-RIS-UAV candidate links of the already selected UE, RIS, UAV.

\end{itemize}
We stop the greedy heuristic when there is no feasible link to add. Since the number of UEs/UAVs is larger than the number of RISs, this heuristic stops when there is no more available RISs that have not been selected. The steps of the  greedy algorithm are given in Algorithm 2.

\begin{algorithm}[t!]
	\caption{The Proposed Greedy Perturbation Heuristic}
	\label{Algorithm2}
	\begin{algorithmic}[1]
		\State \textbf{Input:} UEs, UAVs, RISs, and network topology.
  \State Initially set $\mathbf L' \leftarrow \mathbf L$
        \For{$i=1, 2, \ldots, R$}
        \State Calculate $\mathbf v$ of the associated $\mathbf L'$. 
        \State From the remaining candidate edges corresponding to the UE-RIS-UAV schedules, add an
edge $l$ connecting UE $u$ and UAV $a$ with largest $w_{l}V_{ua}$, where $w_{l}$ 
is calculated as in Algorithm 1.
        \State Calculate the corresponding phase shift at PRU of the selected $r$-th RIS that reflects the signal of the $u$-th UE to the $a$-th UAV as  in \eref{phase}.
        \State Based on the selected edge $l$, update $\mathbf L'$.
         \State Remove all the UE-RIS-UAV candidate links of the already selected UE, RIS, UAV.
         \EndFor         
	\State \textbf{Output:} $\lambda_2(\mathbf L')$.
	\end{algorithmic}
\end{algorithm}

\section{Perturbation Heuristic Analysis}\label{LU}
In this section, we derive the upper and lower bounds of $\lambda_2 (\mathbf L')$ based on the proposed perturbation heuristic solution. Then, the computational complexity of the proposed schemes, as compared to the exhaustive search, is analyzed in  \sref{cc}.

\subsection{Lower and Upper Bounds Analysis}
Given the proposed perturbation heuristic, \pref{prep2} derives the lower and upper bounds on the algebraic connectivity of a graph obtained by adding $R$ edges connecting UE nodes to UAV nodes to a single connected graph. 
\begin{proposition}
\label{prep2}
Let $\mathbf L$ be the Laplacian matrix of the original connected graph $\mathcal G$.  Suppose we add an edge $l$  that connects UE node $u$ to UAV node $a$ through RIS $r$ to $\mathcal G$. Then, we have the following lower and upper bounds, respectively, 
for $\lambda_2 (\mathbf L+w_{l} \mathbf a_l \mathbf a_l^T)$:
\begin{align} \label{final_lower_bound_o} \nonumber 
& \lambda_2 (\mathbf L+w_{l}  \mathbf a_l \mathbf a_l^T)  \geq \lambda_2 \\ &  + \frac{w_l V_{ua}+\delta+2w_l -\sqrt{5w_lV_{ua}-w_l \delta^2+4w_l^2+4 w_l\delta}}{2},
\end{align}
\begin{align}   \label{upper_o}
\lambda_2 (\mathbf L+w_{l}  \mathbf a_l \mathbf a_l^T) \leq \lambda_2 + \frac{w_{l} V_{ua}} {1+ w_{l}  (2- V_{ua})/ (\lambda_n - \lambda_2)},
\end{align}
where $\delta= \lambda_3 - \lambda_2$.

\end{proposition}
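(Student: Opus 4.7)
The plan is to derive both bounds from the secular equation associated with the symmetric rank-one update $w_l\mathbf{a}_l\mathbf{a}_l^T$. First I would diagonalize: let $0=\lambda_1<\lambda_2<\lambda_3\le\cdots\le\lambda_n$ be the eigenvalues of $\mathbf{L}$ with orthonormal eigenvectors $\mathbf{v}_1,\ldots,\mathbf{v}_n$, and set $c_i:=\mathbf{v}_i^T\mathbf{a}_l$. Because $\mathbf{a}_l^T\mathbf{1}=0$ we have $c_1=0$; and because $\mathbf{v}_2$ is the Fiedler vector $\mathbf{v}$ from \pref{prep1}, $c_2=v_u-v_a$, giving $c_2^2=V_{ua}$ and $\sum_{i\ge 3}c_i^2=\|\mathbf{a}_l\|^2-V_{ua}=2-V_{ua}$. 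Since $w_l\mathbf{a}_l\mathbf{a}_l^T\succeq 0$, Weyl's interlacing theorem confines $\lambda_2(\mathbf{L}')$ to $[\lambda_2,\lambda_3]$, so that $\tau:=\lambda_2(\mathbf{L}')-\lambda_2\in[0,\delta]$.

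Next I would write the secular equation: if $\mu=\lambda_2(\mathbf{L}')$ is not already an eigenvalue of $\mathbf{L}$, then eliminating the eigenvector from $(\mathbf{L}'-\mu\mathbf{I})\mathbf{y}=\mathbf{0}$ and contracting with $\mathbf{a}_l^T(\mathbf{L}-\mu\mathbf{I})^{-1}$ yields
$$\frac{1}{w_l}\;=\;\frac{V_{ua}}{\mu-\lambda_2}\;-\;\sum_{i\ge 3}\frac{c_i^2}{\lambda_i-\mu}.$$
The right-hand side $f(\mu)$ is strictly decreasing on $(\lambda_2,\lambda_3)$, so replacing $f$ by a tractable upper (resp.\ lower) estimate $g$ and solving $g(\mu)=1/w_l$ returns an upper (resp.\ lower) bound on $\mu$.

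For the upper bound, I would use the coarsest tail estimate $\lambda_i-\mu\le\lambda_n-\lambda_2$ for every $i\ge 3$, so that $\sum_{i\ge 3}c_i^2/(\lambda_i-\mu)\ge(2-V_{ua})/(\lambda_n-\lambda_2)$; upper-bounding $f$ by this function gives a linear equation in $\tau$ whose solution is exactly the displayed expression $w_l V_{ua}/[1+w_l(2-V_{ua})/(\lambda_n-\lambda_2)]$. For the lower bound, I would instead use the tighter estimate $\lambda_i-\mu\ge\lambda_3-\mu$ for $i\ge 3$, giving $\sum_{i\ge 3}c_i^2/(\lambda_i-\mu)\le(2-V_{ua})/(\lambda_3-\mu)$. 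Substituting into the secular equation and clearing denominators yields a quadratic in $\tau$, and the smaller of its two positive roots is the candidate lower bound.

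The main obstacle will be the algebraic bookkeeping around the quadratic, in particular verifying that the smaller of its two roots is the one lying in the admissible interval $[0,\delta]$ carved out by interlacing. I would handle this by computing the sum and product of the roots (both positive) and checking that the larger root always exceeds $\delta$ whenever $V_{ua}\in(0,2]$, so the smaller root is the unique admissible solution and yields the claimed lower bound after the rearrangement displayed in the proposition. A secondary concern is that the secular equation tacitly requires $\lambda_2$ to be simple and strictly separated from $\lambda_1=0$ and $\lambda_3$; this is already the standing assumption (the Fiedler vector is well-defined and isolated, as already used in \pref{prep1}), and the degenerate cases $V_{ua}=0$ or $\delta\to 0$ follow by continuity.
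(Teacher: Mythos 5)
Your approach is the same as the paper's: both proofs diagonalize $\mathbf L$, invoke Golub's secular equation for the symmetric rank-one update, use interlacing to confine $\lambda_2(\mathbf L')$ to $(\lambda_2,\lambda_3)$, and then obtain the upper bound from the tail estimate $\lambda_i-\mu\le\lambda_n-\lambda_2$ and the lower bound from $\lambda_i-\mu\ge\lambda_3-\mu$; your monotonicity argument for why the root of each surrogate equation bounds the true root, and your planned check that the larger root of the resulting quadratic exceeds $\delta$ (the quadratic is indeed negative at $\epsilon=\delta$), are exactly what is needed. The one point of divergence is the lower bound's tail mass: you keep $\sum_{i\ge 3}c_i^2=2-V_{ua}$ exact, which produces a slightly tighter quadratic whose root does not literally reduce to the expression in \eref{final_lower_bound_o}, whereas the paper further relaxes the tail sum to $2$, which is where the standalone $2w_l$ and $4w_l^2$ terms in the stated formula come from. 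Since enlarging the tail sum only decreases the surrogate function and hence its root, that extra relaxation preserves the direction of the inequality, so you need only add this one step (or observe that your bound is stronger and implies the stated one).
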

\begin{proof}
Let $\mathbf L = \mathbf Q \mathbf D \mathbf Q^T$ be the eigenvalue decomposition of $\mathbf L$, where $\mathbf D$ is the diagonal matrix whose diagonal elements are the corresponding eigenvalues, denoted by $\lambda_1, \lambda_2, \ldots, \lambda_V$, and $\mathbf Q$ is an orthogonal matrix whose columns are the real, orthonormal eigenvectors of $\mathbf L$. Suppose that all entries in $\mathbf D$ are distinct (same process applies if eigenvalues other than $\lambda_2$ are repeated \cite{4177113}). Note that $\mathbf a_l \mathbf a_l^T$ is a matrix of rank-one and therefore our analysis follows the same steps used in \cite{Golub} for eigenvalues perturbation of a matrix with rank-one update. In particular,  the standard form in \cite{Golub} is
\begin{align} \label{updated_matrix}
\underbrace{\mathbf L'}_\text{updated matrix}= \mathbf Q\underbrace{\mathbf D}_\text{diagonal matrix with $\lambda_i$ entries} \mathbf Q^T+ \rho \underbrace{\mathbf a_l \mathbf a_l^T}_\text{rank-one},
\end{align}
where $\rho >0$, which will be replaced by $w_{l}$. Recall that $w_{l}$ is a non-negative value. Thus, we can write (\ref{updated_matrix}) as
\begin{align} \label{updated_matrix1}
\mathbf Q \mathbf L' \mathbf Q^T= \mathbf D + w_{l} (\mathbf Q \mathbf a_l) (\mathbf Q \mathbf a_l)^T.
\end{align}
We denote the eigenvalues of $\mathbf L'$ by $\tilde{\lambda}_1, \tilde{\lambda}_2, \ldots, \tilde{\lambda}_V$. Since we have one graph component, we assume  $\tilde{\lambda}_i \leq \tilde{\lambda}_{i+1}$, and similarly, $\lambda_i \leq \lambda_{i+1}$. Note that the matrices $\mathbf L$ and $\mathbf L'$ both have eigenvalue $0$ with the corresponding eigenvector $\mathbf 1$, i.e.,  $\lambda_1$ and $\tilde{\lambda}_1$ are zero.  Thus, we are interested in the remaining $V-1$ eigenvalues of $\mathbf L'$, i.e., particularly $\tilde{\lambda}_2$. Therefore, the eigenvalues of $\mathbf L'$ are the same as those of $\mathbf D + w_{l} \mathbf u \mathbf u^T$, where $\mathbf u = \mathbf Q \mathbf a_l$. To find the eigenvalues of $\mathbf L'$, assume first that $\mathbf D - \tilde{\lambda} \mathbf I$ is non-singular, we compute the characteristic polynomial as follows:
\begin{align}\nonumber 
\text {det}(\mathbf L'- \tilde{\lambda} \mathbf I)& = \text {det}(\mathbf D+ w_{l}\mathbf u \mathbf u^T - \tilde{\lambda} \mathbf I) \\ & \nonumber 
=   \text {det}((\mathbf D-\tilde{\lambda} \mathbf I)(\mathbf I+w_{l}(\mathbf D- \tilde{\lambda} \mathbf I)^{-1} \mathbf u \mathbf u^T))     \\ & \nonumber 
= \text {det}(\mathbf D-\tilde{\lambda} \mathbf I) \text {det}(\mathbf I+w_{l}(\mathbf D- \tilde{\lambda} \mathbf I)^{-1} \mathbf u \mathbf u^T).
\end{align}
Since $\mathbf D - \tilde{\lambda} \mathbf I$ is non-singular, $\text {det}(\mathbf I+w_{l}(\mathbf D- \tilde{\lambda} \mathbf I)^{-1} \mathbf u \mathbf u^T)=0$ whenever $\tilde{\lambda}$ is an eigenvalue. Note that $\mathbf I+w_{l}(\mathbf D- \tilde{\lambda} \mathbf I)^{-1} \mathbf u \mathbf u^T)$ is the identity plus rank-one matrix. The determinant of such matrix is as follows\footnote{By definition, if $\mathbf x$ and $\mathbf y$ are vectors, $\text {det}(\mathbf I + \mathbf x \mathbf y^T)= 1+\mathbf y^T \mathbf x$ \cite{book}.}:
\begin{align}\nonumber 
\text {det}(\mathbf I+w_{l}(\mathbf D- \tilde{\lambda} \mathbf I)^{-1} \mathbf u \mathbf u^T) &= (1 + w_{l}\mathbf u^T (\mathbf D - \tilde{\lambda} \mathbf I)^{-1}\mathbf u)\\&
= \underbrace{\big( 1+ w_{l} \sum_{i=2}^{V} \frac{u_i^2}{\lambda_i-\tilde{\lambda}}\big)}_\text{secular equation},
\end{align}
where $u_i=\mathbf Q_i^T \mathbf a_l$. Thus, $u_1=\mathbf Q_1^T \mathbf a_l=0$ since the values of the eigenvector corresponding to $\lambda_1$ is the same and $u_2= \mathbf Q_2^T \mathbf a_l= v_u-v_a$.
Golub \cite{Golub} showed that in the above situation the eigenvalues of $\mathbf L'$ are the zeros of the  secular equation, which is given as follows
\begin{align}\nonumber \label{secular}
&f(\tilde{\lambda})=1+w_{l} \sum_{i=2}^{V} \frac{u^2_i}{\lambda_i-\tilde{\lambda}}\\ &   
1+w_{l} \sum_{i=2}^{V} \frac{u^2_i}{\lambda_i-\tilde{\lambda}}=0 \to
w_{l} \sum_{i=2}^{V} \frac{u^2_i}{\tilde{\lambda}-\lambda_i}=1.
\end{align}

 \begin{figure}[t!]
	\centering
		\centerline{\includegraphics[width=0.78\linewidth]{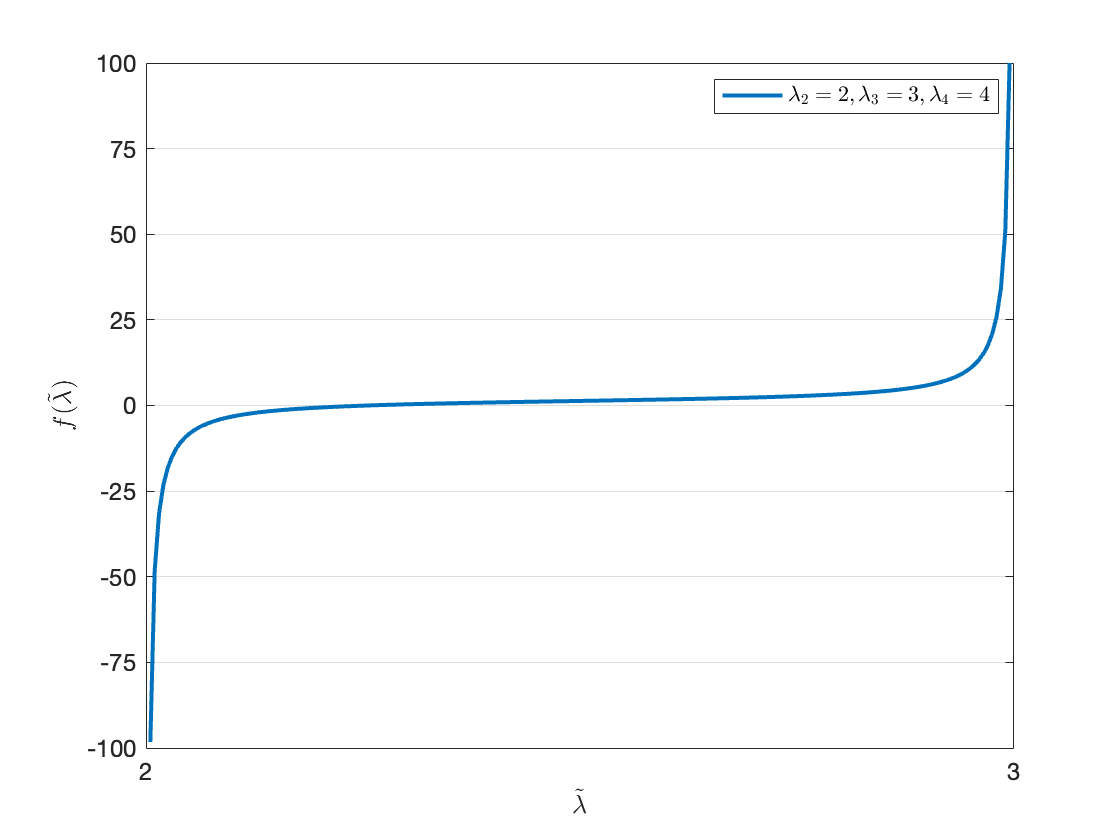}}
  \caption{ $f(\tilde{\lambda})= 1+\frac{0.5}{2-\tilde{\lambda}}+\frac{0.5}{3-\tilde{\lambda}}+\frac{0.5}{4-\tilde{\lambda}}$ in the interval $(\lambda_{2}, \lambda_{3})$ for $\lambda_i=i, u^2_i=0.5$, $i=2, 3, 4$ and $w_l=1$.}
	\label{fig2}
\end{figure}

Consider $u^2_i=0.5, \lambda_i=i, i=\{2,3,4\}$, $V=4$, and $w_l =1$, the function  $f(\tilde{\lambda})= 1+ \sum_{i=2}^{4} \frac{0.5}{\lambda_i-\tilde{\lambda}}$ has the graph shown in \Fref{fig2} for the interval $(\lambda_{2}, \lambda_{3})$. Since  $f'(\tilde{\lambda})=\sum_{i=2}^{4} \frac{u^2_i}{(\lambda_i-\tilde{\lambda})^2} > 0$, the function is strictly increasing in the interval  ($\lambda_i, \lambda_{i+1}$). Thus, the roots of  $f(\tilde{\lambda})$ are interlaced
by the $\lambda_i$. Since $\mathbf L'$ has more edges than $\mathbf L$ and by eigenvalue interlacing, we have \cite{Golub, JH}
\begin{align}\nonumber 
 & \lambda_i   \leq \tilde{\lambda}_{i} \leq \lambda_{i+1}, ~~~~~~~~\forall i =2, \ldots, V-1
\end{align}
 Let us consider $\tilde{\lambda}$ in the interval $(\lambda_{2}, \lambda_{3})$.  
For $\tilde{\lambda}$ in the interval $(\lambda_{2}, \lambda_{3})$, from Fig. 2,  if $f(\tilde{\lambda}) \leq 0$, then $\tilde{\lambda} \leq \tilde{\lambda}_2$. It is easy to see that $f(\tilde{\lambda}) \leq 0$ is equivalent to 
 \begin{align} \label{36}
w_{l} \frac{u^2_2}{\tilde{\lambda} -\lambda_2} \geq 1 +w_{l} \sum_{i=3}^{V} \frac{u^2_i}{\lambda_i-\tilde{\lambda}}.
 \end{align}
Thus, we conclude that (\ref{36}) implies $\tilde{\lambda} \leq \tilde{\lambda}_2$.

Recall
$\lVert  u \rVert = \lVert  \mathbf Q^T \mathbf a_l \rVert = \lVert   \mathbf a_l \rVert= 2$, $\sum_{i=3}^{V} u_i^2 \leq 2$ \cite{4177113}. From the right hand side of \eref{36}, we have
\begin{align}\label{37a}
& 1 +w_{l} \sum_{i=3}^{V} \frac{u^2_i}{\lambda_i-\tilde{\lambda}} \overset{(a)} \leq 1+   \frac{w_{l}}{\lambda_3 -\tilde{\lambda}}\sum_{i=3}^{V} u_i^2 \overset{(b)} \leq 1+ w_{l}  \frac{2}{\lambda_3 -\tilde{\lambda}},
 \end{align}
where $(a)$ and $(b)$ follow from the facts that  $\lambda_3 \leq \lambda_{4}, \ldots, \leq \lambda_{n}$ and $\sum_{i=3}^V u^2_i \leq 2$, respectively.
From \eref{37a}, if 
\begin{align}\label{37}
w_{l} \frac{u^2_2}{\tilde{\lambda}-\lambda_2} \geq 1+ w_{l}  \frac{2}{\lambda_3 -\tilde{\lambda}},
 \end{align}
 then $ \tilde{\lambda} \leq \tilde{\lambda}_2$. We set $\tilde{\lambda}-\lambda_2= \epsilon$ and $\lambda_3 -\lambda_2 = \delta$ in \eref{37}. We
aim to find $\epsilon >0$ (i.e., $\tilde{\lambda} > \lambda_2$) such that
\begin{equation} \label{epsilon}
w_{l} \frac{u^2_2}{\epsilon} \geq 1+ w_{l}  \frac{2}{\delta- \epsilon},
 \end{equation}
 By solving (\ref{epsilon}), we can verify that 
 \begin{align} 
 \epsilon= \frac{w_lu_2^2+\delta+2w_l-\sqrt{w_l(u_2^2-\delta^2)+4w_l^2+4w_l(u_2^2+\delta)}}{2},   
\end{align}
satisfies \eref{epsilon}. Thus,   the lower bound is
\begin{align} \label{final_lower_bound}
 \tilde{\lambda}_2& \geq \lambda_2  +\frac{w_lu_2^2+\delta+2w_l-\sqrt{5w_lu_2^2-w_l\delta^2+4w_l^2+4w_l\delta}}{2},   
\end{align}
where $u_2^2= V_{ua}$. This concludes that \eref{final_lower_bound} gives the lower bound in \eref{final_lower_bound_o}. 

Now, we derive the upper bound of $\tilde{\lambda}_2$. A sharper upper bound than \eref{weaker} can be obtained using the secular equation. From \eref{secular},   the algebraic connectivity of $\mathbf L'$ is the number $\tilde{\lambda}_2  \in (\lambda_2, \lambda_3)$ satisfying 
\begin{align}  \nonumber 
& w_{l} \frac{u^2_2}{\tilde{\lambda}-\lambda_2}+w_{l} \sum_{i=3}^{V} \frac{u^2_i}{\tilde{\lambda}-\lambda_i}=1  \\& \nonumber 
 w_{l} \frac{u^2_2}{\tilde{\lambda}-\lambda_2} = 1+ w_{l} \sum_{i=3}^{V} \frac{u^2_i}{\lambda_i - \tilde{\lambda}} \\&  \nonumber 
\frac{w_{l}}{\tilde{\lambda} - \lambda_2} = \frac{1+ w_{l} \sum_{i=3}^{V} \frac{u^2_i}{\lambda_i - \tilde{\lambda}}}{u^2_2} \\& \nonumber 
\tilde{\lambda} - \lambda_2 = \frac{w_{l} u^2_2} {1+ w_{l} \sum_{i=3}^{V} \frac{u^2_i}{\lambda_i - \tilde{\lambda}}} \\ & \nonumber 
\tilde{\lambda} = \lambda_2 + \frac{w_{l} u^2_2} {1+ w_{l} \sum_{i=3}^{V} \frac{u^2_i}{\lambda_i - \tilde{\lambda}}}
\end{align}

For the special case of $\tilde{\lambda}= \tilde{\lambda}_2$, we have
\begin{align}  \nonumber 
\tilde{\lambda}_2 = \lambda_2 + \frac{w_{l} u^2_2} {1+ w_{l} \sum_{i=3}^{V} \frac{u^2_i}{\lambda_i - \tilde{\lambda}_2}}.
\end{align}
Thus, the  upper bound of $\tilde{\lambda}_2$ is given as follows
\begin{align}  \nonumber \label{upper}
& \tilde{\lambda}_2 \overset{(a)}\leq  \lambda_2 + \frac{w_{l} u^2_2} {1+ w_{l} \sum_{i=3}^{V} \frac{u^2_i}{\lambda_i - \lambda_2}} \\ &  
\tilde{\lambda}_2 \overset{(b)}\leq  \lambda_2 + \frac{w_{l} u^2_2} {1+ w_{l}(\frac{1}{\lambda_n-\lambda_2}(\sum_{i=2}^{V} u^2_i - u^2_2))} \\& \nonumber 
\tilde{\lambda}_2 \overset{(c)}\leq \lambda_2 + \frac{w_{l} V_{ua}} {1+ w_{l}  (2- V_{ua})/ (\lambda_n - \lambda_2)},
\end{align}
where $(a)$, $(b)$, and $(c)$ come from $\lambda_2 < \tilde{\lambda}_2$, $\lambda_n \geq \lambda_{n-1}, \ldots, \geq \lambda_{2}$, $\sum_{i=2}^V u^2_i=2$, and $u_2^2= V_{ua}$, respectively. This concludes that \eref{upper} gives the upper bound in \eref{upper_o}.
\end{proof}

\subsection{Computational Complexity} \label{cc}
The exhaustive search requires a computational complexity of calculating $\lambda_2 (\mathbf L')$ for ${|\mathbf z| \choose R}$ Laplacian matrices, in which each  Laplacian matrix computation requires $\mathcal O\big(4 E' V^3/3\big)$ \cite{L}. Thus, exhaustive search requires $\mathcal O\big({|\mathbf z| \choose R} 4 E' V^3/3\big)$ operations. On the other hand, the SDP optimization  for the convex relaxation runs in high computational complexity for large $|\mathbf z|$. The proposed perturbation heuristic requires only an eigenvector value computation, as opposed to the exhaustive search. In particular, for the first added link, it computes the vector $\mathbf v$ of the current $\mathbf L$. Computing all the eigenvectors of an $V \times V$ dense matrix costs approximately $\mathcal O(4V^3/3)$ arithmetic operations \cite{4177113}. Since we have at maximum $R$ possible links to be added, the proposed perturbation solution runs in $\mathcal O(4RV^3/3+ 4R E' V^3/3)= \mathcal O(4RV^3/3(1+E')) \approx \mathcal O(4RV^3E'/3)$ arithmetic operations.

\section{Numerical Results}\label{NR}
We run MATLAB simulations to demonstrate the viability of the proposed schemes, and their superiority to the existing solutions. The simulation parameters of RISs configurations and UAV communications  are consistent with those used in \cite{8292633} and \cite{9293155}, respectively. We consider a RIS-assisted UAV system in an area of $150 ~m \times 150 ~m$, where the RISs have fixed locations and the UEs and the UAVs are distributed randomly. The RISs  are located at  an altitude of $20$ m,  $M=100$, $d_r=5$ cm, $d_c=5$ cm, $\beta_0=10^{-6}$, $N_0=-130$ dBm, the altitude of the UAVs is $50$ m, $f_c=3\times10^9$ Hz, $c=3\times 10^8$ m/s, $\alpha=4$, $p=1$ watt, $P=5$ watt, $\gamma_0^\text{(U)}=85$ dB, and $\gamma_0^\text{(A)}=80$ dB. Unless specified otherwise, $A=10$, $U=15$, $\gamma^\text{(RIS)}_0=30$ dB, and $\epsilon= 10^{-5}$.

The optimization problem in (\ref{eqob}) is solved using the two proposed heuristics in \sref{CR} and \sref{PH},  denoted by SDP and Proposed Perturbation, which are inspired by \cite{saifglobecom}, \cite{4177113}, respectively. For the sake of numerical comparison, the problem in (\ref{eqob}) is solved optimally via exhaustive search, which searches over all the feasible possible links between the UEs and the UAVs through the RISs, and then selects the maximum $\lambda_2(\mathbf L')$. In addition, we consider solving (\ref{eqob}) using the two benchmark schemes: original network without RISs deployment and random link selection.  Finally, we implement the upper and lower bounds, which are computed from \eref{upper_o} and \eref{final_lower_bound_o}, respectively.  Our performance measure is the network connectivity, which we calculate using $500$ iterations at each chosen value of UE,  UAV, RIS, and SNR threshold of the RISs. In each iteration, we change the locations of the UEs and the UAVs. 





In \Fref{fig3}, we plot the average network connectivity of the proposed and benchmark schemes versus the number of UEs $U$. From \Fref{fig3}, we can see that the proposed SDP and perturbation outperform the original scheme in terms of network connectivity. The results from the perturbation scheme are very close to the actual optimal value obtained using exhaustive search. This is because the perturbation scheme calculates the relative values of the Fiedler vector and selects the largest value that offers the possible maximum increase in the network connectivity, which is corresponding to  the desired UE-RIS-UAV link.  For this reason, the performance of the perturbation heuristic is also close to the upper bound. Due to controlling the phase shift of the RISs, the proposed schemes judiciously establish new links that connect the UEs to the desired UAVs such that the UEs do not miss the communications to the network while improving network connectivity.  Between these two proposed solutions, the  perturbation heuristic significantly outperforms SDP since the latter is sub-optimal. The original scheme that is without RISs deployment has poor performance. This interestingly shows  that by  adding a few number of low-cost passive nodes,   the average network connectivity of RIS-assisted UAV networks is improved significantly.  Notably, the values of $\lambda_2(\mathbf L')$  of all the schemes decreases as  the number of UEs increases, since adding more unconnected UEs may result in a sparse graph with low network connectivity. 

We observe from \Fref{fig3} that the performance of the proposed perturbation is very close to that of the optimal scheme. For ease of illustration,  we include the optimal scheme in \Fref{fig3} only and omit it in the remaining figures.

 \begin{figure}[t!]
	\centering
		\centerline{\includegraphics[width=0.95\linewidth]{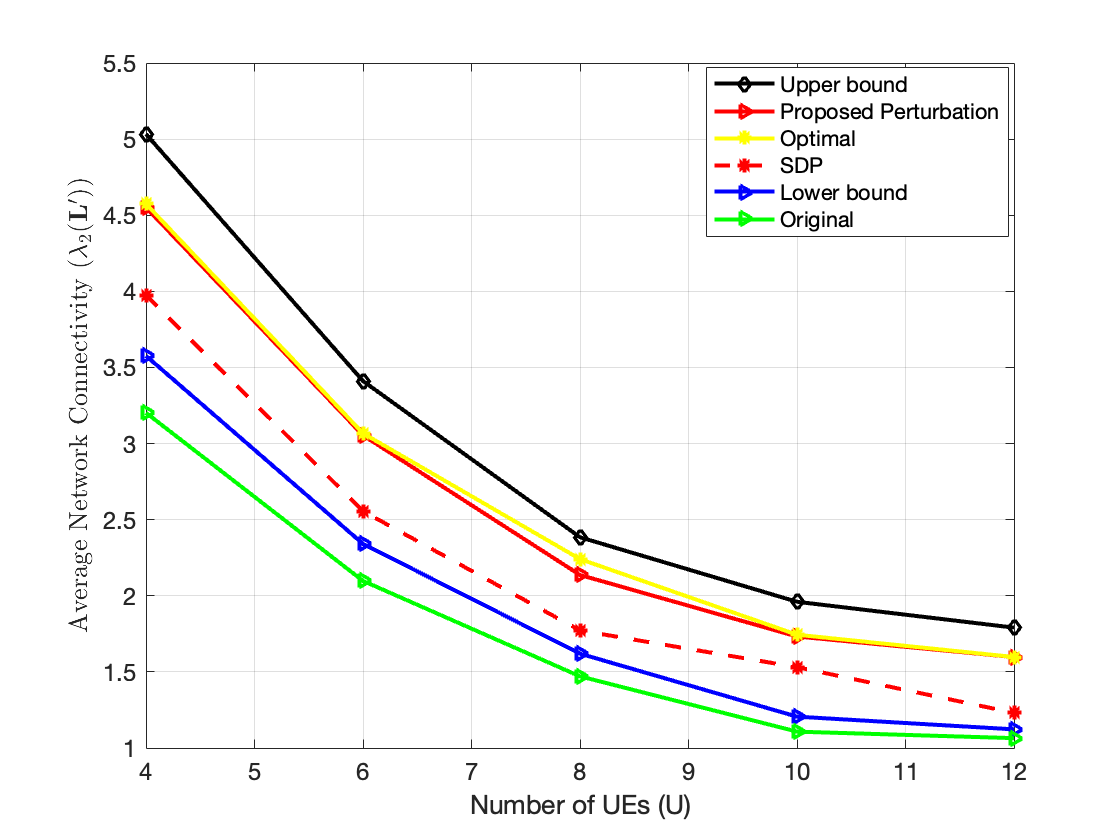}}
  \caption{The average network connectivity $\lambda_2(\mathbf L')$ versus the number of UEs $U$ for $7$ UAVs and $3$ RISs.}
	\label{fig3}
\end{figure}

In \Fref{fig4} and \Fref{fig5}, we show the average network connectivity of the proposed and benchmark schemes versus the  number of UAVs $A$ in different setups, i.e., small and large network sizes. For a small number of UAVs in \Fref{fig4}, the proposed SDP and perturbation schemes offer a slight performance gain in terms of  network connectivity compared to the original scheme. This is because our proposed schemes have a few options of UE-UAV links, where the RISs can direct the signal of the UEs to a few number of UAVs. However, when the number of UAVs increases, the proposed schemes smartly select  effective UE-RIS-UAV links  that significantly maximize the network connectivity. It is noted that $\lambda_2(\mathbf L')$ of all schemes increases  with the number of UAVs since adding more connected nodes to the network increases the number of new links, which increases the network connectivity. This also can be seen in \Fref{fig5} in the case of a large number of UAV nodes.

We observe that the values of $\lambda_2(\mathbf L')$ in \Fref{fig3} are smaller than the values of  $\lambda_2(\mathbf L')$ in \Fref{fig4} and \Fref{fig5} for all the UAV configurations. This is reasonable because adding more connected nodes of UAVs, adds more links to the network, thus improves the network connectivity  than adding more unconnected nodes of UEs.   The latter makes the network less connected (i.e., more UE nodes and no links between them).

 \begin{figure}[t!]
	\centering
		\centerline{\includegraphics[width=0.95\linewidth]{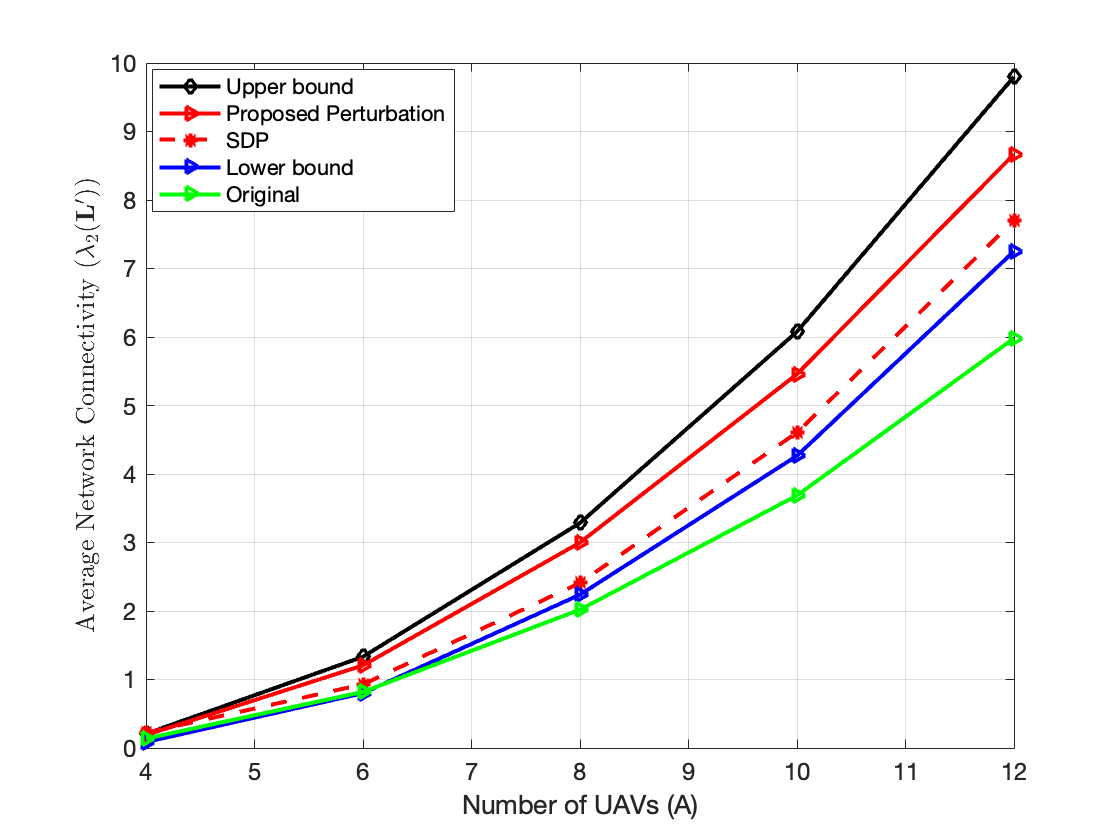}}
  \caption{The average network connectivity $\lambda_2(\mathbf L')$ versus the number of UAVs $A$ for $10$ UEs and $3$ RISs.}
	\label{fig4}
\end{figure}

 \begin{figure}[t!]
	\centering
		\centerline{\includegraphics[width=0.95\linewidth]{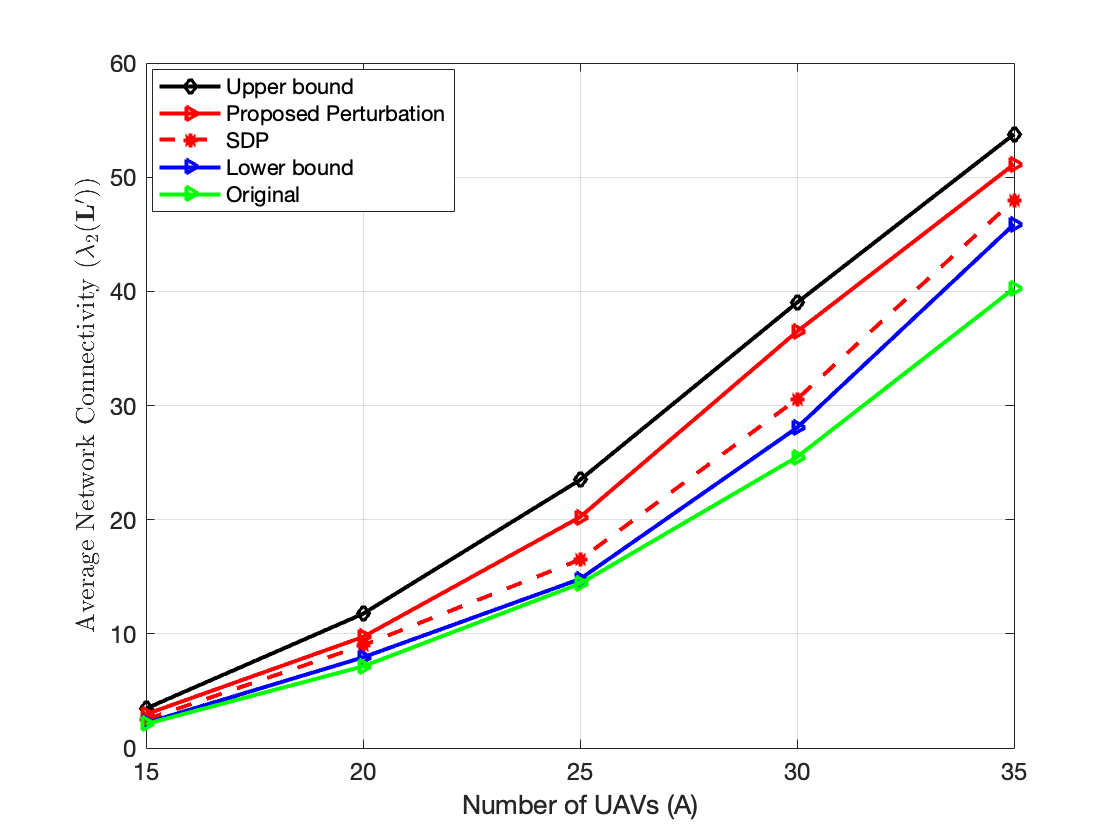}}
  \caption{The average network connectivity $\lambda_2(\mathbf L')$ versus a large number of UAVs $A$ for $22$ UEs and $3$ RISs.}
	\label{fig5}
\end{figure}

\begin{figure}[t!]
	\centering
		\centerline{\includegraphics[width=0.95\linewidth]{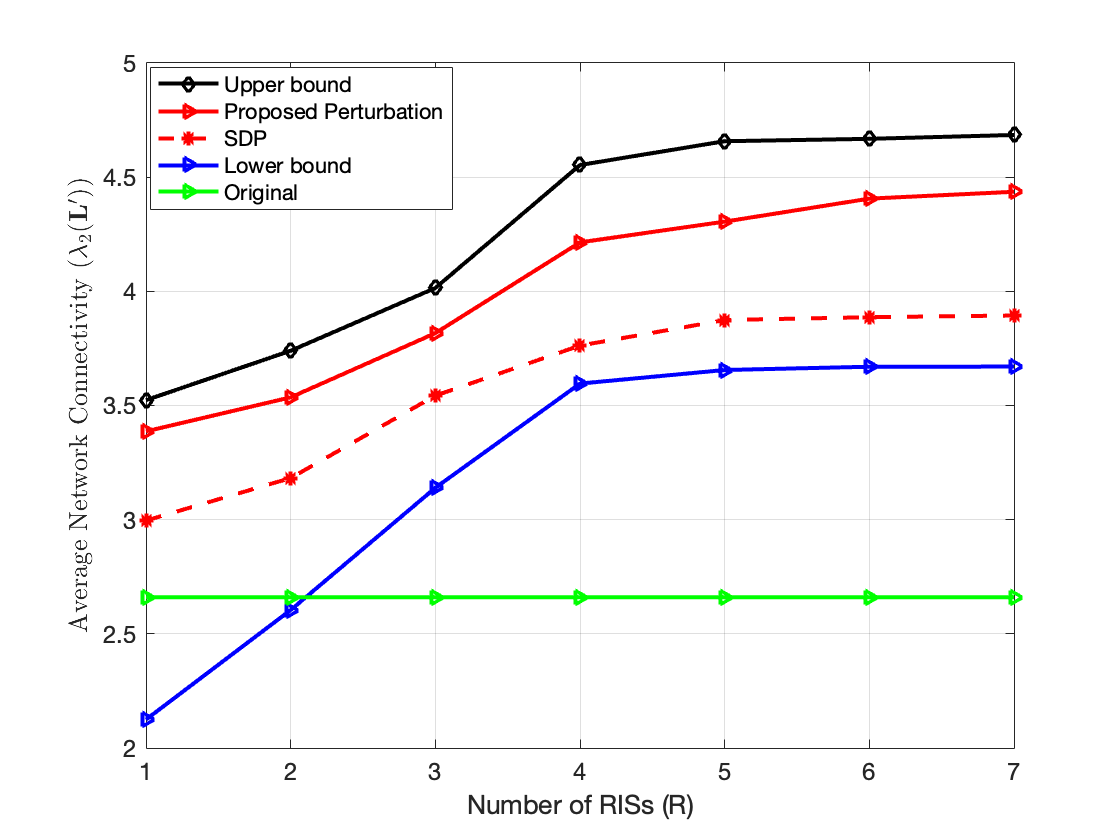}}
  \caption{The average network connectivity $\lambda_2(\mathbf L')$ versus the number of RISs $R$ for $15$ UEs and $10$ UAVs.}
	\label{fig6}
\end{figure}

To show that adding a few passive nodes is indeed crucial to maximize the network connectivity of UAV networks, \Fref{fig6}  plots the average network connectivity versus the number of RISs $R$.  For plotting this figure, we change the number of RISs and distribute them randomly and consider $15$ UEs and $10$ UAVs. For a small number of RISs in \Fref{fig6}, the performance of all schemes increases significantly since there are many possible selections of UEs and UAVs for each RIS, and therefore all the schemes select the best schedule that maximizes the network connectivity. However, when the number of RISs increases ($R>5$), there are no more good opportunities of selecting links that connect the remaining UEs  to the remaining UAVs. Thus, we notice a slight performance increase in the network connectivity of all the schemes. This also shows that adding the first few links is important to improve the network connectivity of the network. Note that the original scheme works irrespective of the RISs deployment, thus it has fixed performance  when we change the number of RISs.

To show the superior performance of  the proposed perturbation as compared to the random link addition, \Fref{fig7}  studies the average network connectivity versus the number of RISs $R$ for $15$ UEs and $10$ UAVs. For fair comparison, the random scheme is simulated  similar to the perturbation heuristic except that the selection of a link $l$ is randomly, not based on the maximum value of $w_lV_{ua}$. Thus, both schemes add the same number of links to the network. As expected, random addition performs poorly compared to the perturbation heuristic. However, the point to be highlighted here  is that a large increase in the average network connectivity can be obtained by adding a new edges carefully as we propose in the perturbation heuristic.

 \begin{figure}[t!]
	\centering
		\centerline{\includegraphics[width=0.95\linewidth]{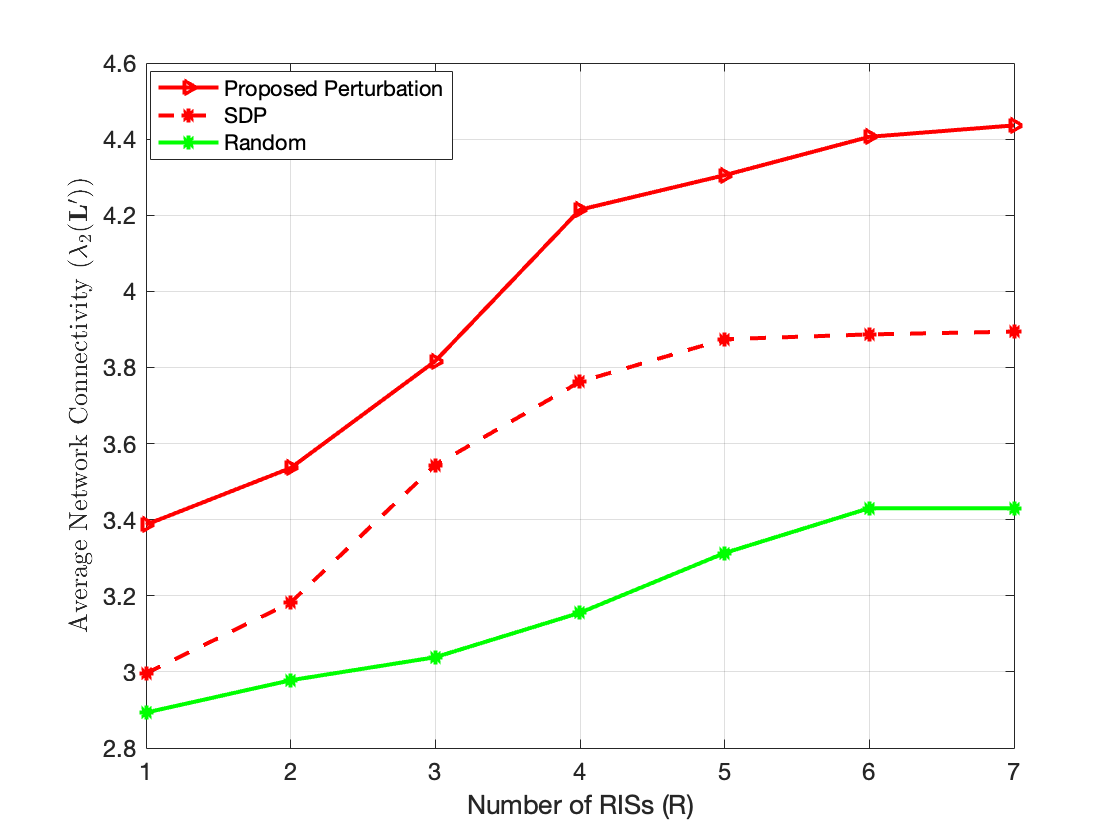}}
  \caption{The average network connectivity $\lambda_2(\mathbf L')$ versus the number of RISs $R$ for $15$ UEs and $10$ UAVs. A comparison of the perturbation heuristic with a random link addition.}
	\label{fig7}
\end{figure}

In \Fref{fig8}, we show the impact of the SNR threshold $\gamma_0^\text{(RIS)}$ on the network connectivity. For small SNR threshold,   all the links between the UEs and the UAVs through the RIS can satisfy this SNR threshold, thus many alternative links between the potential UE and the UAVs to select to maximize the network connectivity. On the other hand, for high RIS SNR threshold, a few UE-RIS-UAV links can satisfy such high SNR threshold. Thus, the network connectivity of all the schemes is degraded, and it becomes relatively close to the original scheme, which does not get affected by changing $\gamma_0^\text{(RIS)}$.

 \begin{figure}[t!]
	\centering
		\centerline{\includegraphics[width=0.95\linewidth]{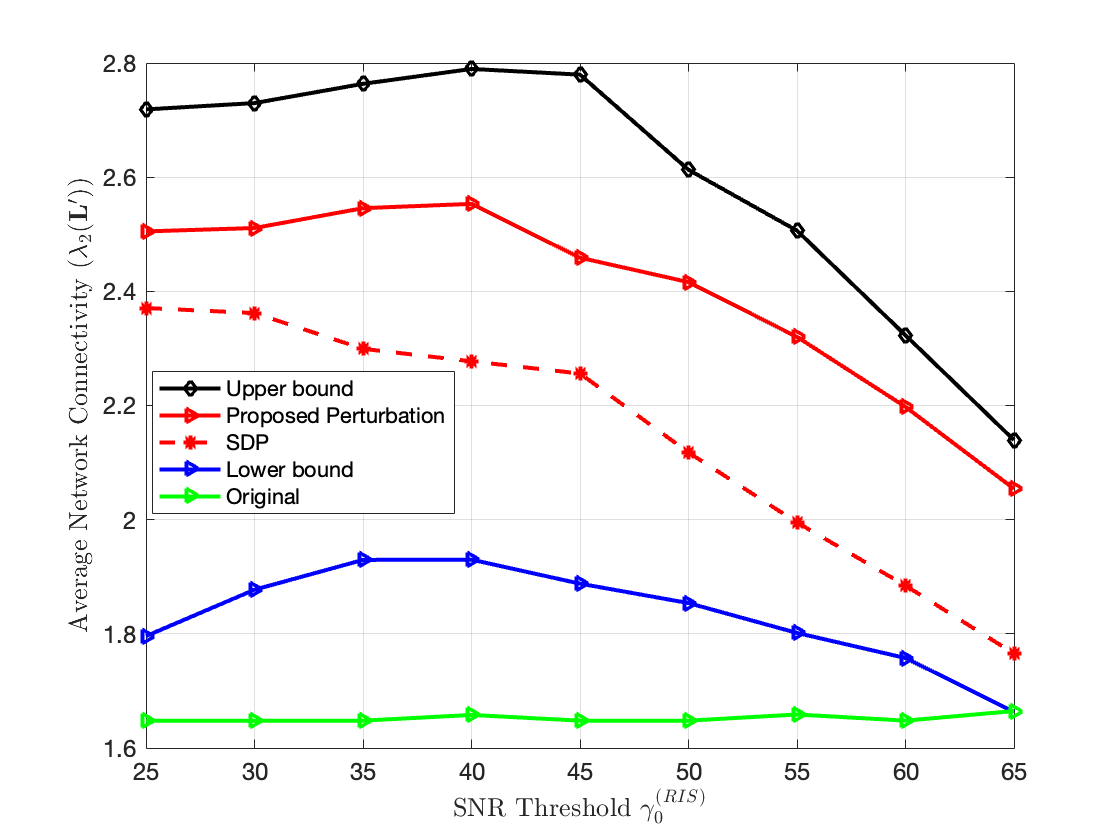}}
  \caption{The average network connectivity $\lambda_2(\mathbf L')$ versus SNR threshold $\gamma_0^\text{(RIS)}$ in dB for $12$ UEs, $8$ UAVs, and $3$ RISs.}
	\label{fig8}
\end{figure}

It is worth remarking that while the random scheme adds a random link to the network, the original scheme does not add a link. The proposed solutions balance between the aforementioned aspects by judiciously selecting an effective link, between a UE and a UAV,  that maximizes the network connectivity. This utilizes the benefits of the cooperation between an appropriate scheduling algorithm design and RISs phase shift configurations.  Compared to the optimal scheme, the proposed perturbation heuristic is near-optimal, however our proposed SDP has a certain degradation in network connectivity that comes as the
achieved polynomial computational complexity as compared to the high complexity of the optimal scheme.

\section{Conclusion}\label{C}
In this paper, we studied the effectiveness of RISs in UAV communications  in order to maximize the network connectivity by adjusting the algebraic connectivity using the reflected links of the RISs. We started by defining the nodes criticality and formulating the proposed network connectivity maximization problem. By embedding the nodes  criticality in the link selection, we proposed two  efficient solutions to solve the proposed problem. First, we proposed to relax the problem to a convex problem to be formulated  as an SDP optimization problem that can be solved efficiently in polynomial time using CVX. Then, we proposed a low-complexity, yet efficient, perturbation heuristic that adds one UE-RIS-UAV link at a time by calculating only the weighted difference between the values of the Fiedler eigenvector. We also derived the lower and upper bounds of the algebraic connectivity obtained by adding new links to the network based on the perturbation heuristic. Through numerical studies, we evaluated the performance of the proposed schemes via extensive simulations. We verified that the proposed schemes result in improved  network connectivity as compared to the existing solutions. In particular, the perturbation heuristic is near-optimal solution that shows very close results in terms of network connectivity compared to the optimal solution.


\begin{IEEEbiography}[{\includegraphics[width=1in,height=1.25in,clip,keepaspectratio, ]{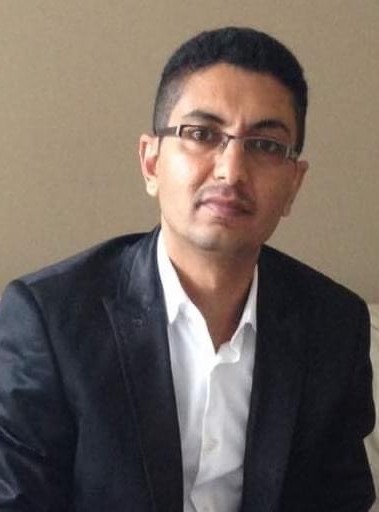}}]{Mohammed S. Al-Abiad} 
received the B.Eng. degree in computer and communication engineering from Taiz University, Taiz, Yemen, in 2010, the M.Sc. in electrical engineering from King Fahd University of Petroleum and Minerals (KFUPM), Dhahran, 
Saudi Arabia, in 2017,   the Ph.D. degree in electrical engineering from the University of British Columbia, BC, Canada, in 2020.  He is currently a Postdoctoral Research Fellow with the Edward S. Rogers Sr. Department of Electrical and Computer Engineering, University of Toronto. He was a Postdoctoral Research Fellow with the School of Engineering at the University of British Columbia, Canada, from 2020 to 2022. He was the recipient of the Natural Science and Engineering Research Council Postdoctoral Fellowship (NSERC PDF) of Canada in 2023. His research interests include  optimization and
resource allocation in wireless communications,  federated learning, mobile edge computing, and wireless networks connectivity using RISs. 
\end{IEEEbiography}

\begin{IEEEbiography}[{\includegraphics[width=1in,height=1.25in,clip,keepaspectratio, ]{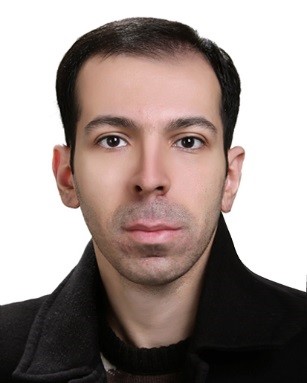}}]{Mohammed Javad-Kalbasi} 
received the M.Sc. degree in electrical and computer engineering from the Isfahan University of Technology in 2014. He is currently pursuing the Ph.D. degree with the University of Toronto. He is also a member of the Fujitsu Co-Creation Research Laboratory, University of Toronto. His main research interests include efficient Communications for the next generation of networks, information theory, scheduling in communications networks, and planning network migration. 
\end{IEEEbiography}

 \begin{IEEEbiography}[{\includegraphics[width=1in,height=1.25in,clip,keepaspectratio, ]{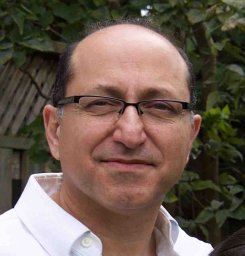}}]{Shahrokh Valaee} is a Professor with the Edward S. Rogers Sr. Department of Electrical and Computer Engineering, University of Toronto, and the holder of Nortel Chair of Network Architectures and Services. He is the Founder and the Director of the Wireless Innovation Research Laboratory (WIRLab) at the University of Toronto. Professor Valaee was the TPC Co-Chair and the Local Organization Chair of the IEEE Personal Mobile Indoor Radio Communication (PIMRC) Symposium 2011. He was the
TPC Co-Chair of ICT 2015, and PIMRC 2017, and the Track Co- Chair of WCNC 2014, PIMRC 2020, VTC Fall 2020. He is the co-chair of the organizing committee for PIMRC 2023. From December 2010 to December 2012, he was the Associate Editor of the IEEE Signal Processing Letters. From 2010 to 2015, he served as an Editor of IEEE Transactions on Wireless Communications. Currently, he is an Editor of the Journal of Computer and System Science and serves as a Distinguished Lecturer for IEEE Communication Society. He was the co-recipient of the best paper award in the IEEE Machine Learning for Signal Processing (MLSP) 2020 workshop. Professor Valaee is a Fellow of the Engineering Institute of Canada, and a Fellow of IEEE.
\end{IEEEbiography}

\end{document}